\documentclass[11pt,a4paper]{article}

\usepackage[margin=1in]{geometry}
\usepackage{amsmath,amssymb,amsfonts,amsthm,mathtools,bm}
\usepackage{braket}
\usepackage{booktabs}
\usepackage{array}
\usepackage{tabularx}
\usepackage{graphicx}
\usepackage{microtype}
\usepackage{enumitem}
\usepackage{tikz}
\usetikzlibrary{arrows.meta,positioning,calc,decorations.pathreplacing}
\usepackage[colorlinks=true,linkcolor=blue,citecolor=blue,urlcolor=blue]{hyperref}
\usepackage{float}

\DeclareMathOperator{\Tr}{Tr}
\DeclareMathOperator{\supp}{supp}
\DeclareMathOperator{\sech}{sech}
\DeclareMathOperator{\artanh}{artanh}

\newcommand{\dd}{\mathrm d}
\newcommand{\ii}{\mathrm i}

\newcommand{\I}{\mathbb I}
\newcommand{\norm}[1]{\lVert #1\rVert}

\newcommand{\Frob}{\mathrm F}
\newcommand{\HS}{\mathrm{HS}}
\newcommand{\Cqp}{C_{\rm qp}}

\newtheorem{theorem}{Theorem}[section]
\newtheorem{proposition}[theorem]{Proposition}

\theoremstyle{definition}

\theoremstyle{remark}
\newtheorem{remark}[theorem]{Remark}

\begin{document}

\title{Infrared Memory of a Majorana Shutter}
\author{Ali Vahedi\thanks{Email: \texttt{vahedi@khu.ac.ir}}\\
\small Department of Physics, Kharazmi University, Tehran, Iran}
\date{October 2026}
\maketitle

\begin{abstract}
We study a finite-width Majorana shutter as a local, time-dependent scattering defect and determine which features of its infrared memory are insensitive to the microscopic spatial profile. For a smooth shutter, the zero-frequency scattering matrix is fixed by the integrated coupling, while finite-width corrections vanish in the long-wavelength limit. A time-dependent switch then produces an infrared Bogoliubov kernel whose singular part is controlled by the change in this scattering matrix.

We examine the many-body response through three observables. The overlap of the pre- and post-shutter Gaussian vacua exhibits Anderson orthogonality, with an exponent set by the scattering angle. A lattice calculation of the post-quench quasiparticle number resolves its logarithmic infrared growth, reaching the predicted coefficient in the slow-switch test and reproducing the expected dependence on the scattering-angle change. When the entanglement cut passes through the shutter, the entropy follows the effective-central-charge description of a conformal defect, with parameter-free agreement at the percent level, improving for the wider shutter. When the shutter lies inside the interval, its contribution remains a subleading finite-size correction. A local bilinear correlator approaches the expected Majorana power law and shows no stable logarithmic amplitude.

These results show that the same low-frequency scattering data control distinct global and spatial observables, while their infrared coefficients are not interchangeable. The remaining limits are also made explicit: the finite-memory correction to the continuum kernel and the controlled continuum limit of large-interval spatial entropy.
\end{abstract}

\section{Introduction}
\label{sec:intro}

A massless Majorana field in one spatial dimension naturally separates
into right- and left-moving components. We use $c$ for the conformal
central charge of a critical theory; a single Majorana conformal field
theory has $c=1/2$ \cite{Majorana1937,CalabreseCardy2004}. If a localized time-dependent
interaction mixes those channels, the device acts as a \emph{chiral
shutter}: the asymptotic channel identification before the switch
differs from that after the switch. The simplest low-frequency
description is an orthogonal two-channel map,
\begin{equation*}
\boldsymbol\chi_{\rm out}(t)=S(t)\boldsymbol\chi_{\rm in}(t),
\qquad S(t)\in SO(2),
\end{equation*}
with the $SO(2)$ statement understood at zero frequency (and as the
corresponding low-frequency limit). At finite frequency the scattering
matrix obeys the Majorana reality constraint and unitarity but is
generally complex. When the early- and late-time rotations differ, the
positive-frequency decomposition changes. The corresponding Bogoliubov
transformation contains an anomalous block and develops infrared memory.
In the idealized point-contact description this memory appears as a
logarithmically divergent squared Hilbert--Schmidt norm of the anomalous
kernel \cite{Vahedi2026a,Araki1971,ShaleStinespring1965}.

That global result raises a different question from the scattering
problem itself: \emph{where does the entanglement generated by the
shutter live?} A frequency-space bipartition can directly isolate the
soft modes responsible for the infrared obstruction. A spatial interval
instead probes a local algebra \cite{Haag1996}. There is no general reason for a global
non-implementability statement to determine the entropy of a bounded
region.

The problem is also related in spirit to time-dependent boundary
problems and the moving-mirror literature, where a time-dependent
boundary condition changes the relation between in- and out-vacua and
can produce particle creation and nontrivial stress-tensor expectation
values \cite{FullingDavies1976,DaviesFulling1977}. Our shutter is not
literally a moving mirror: it is an internal two-channel Majorana
mixing problem. The analogy is useful because it places the
time-dependent positive-frequency decomposition in a broader
field-theoretic context, while the finite-width construction here keeps
the microscopic scattering problem explicit.

More directly, the shutter problem is a two-channel analogue of
entanglement across conformal defects in one-dimensional critical
systems. For a static conformal defect in a free-fermion chain with
transmission amplitude $s$, the entanglement entropy of a subregion
containing the defect acquires an effective central charge
$c_{\rm eff}(s)$ that interpolates between $0$ (perfectly reflecting
defect) and $c$ (perfectly transmitting defect)
\cite{EislerPeschel2010,Peschel2012}. The corresponding
time-dependent problem, where two critical half-chains are suddenly
coupled through a defect, exhibits logarithmic growth of entanglement
governed by the same $c_{\rm eff}$ \cite{EislerPeschel2012}. The
shutter considered here is not identical to a conformal defect --- it
is time-dependent with a finite width and a smooth switch --- but the
connection is close enough that the defect literature provides the
natural language for interpreting the weak-shutter coefficient
$Q_\alpha$ introduced below. We make this connection explicit in
Sec.~\ref{sec:defect-interpretation}.

There are therefore two distinct universality questions. The first is
a one-particle scattering question: whether different microscopic
profiles with the same integrated coupling approach the same
low-frequency scattering matrix. The second is an entanglement
question: whether that one-particle universality survives the sequence
\begin{equation}
S_a\longrightarrow\beta\longrightarrow\Gamma\longrightarrow\Gamma_A\longrightarrow S_A^{(\alpha)}.
\label{eq:logic-chain}
\end{equation}
The first question is answered analytically below. The second requires
additional control after spatial or mode restriction and is therefore
treated separately.

The distinction is particularly sharp for a Majorana field. The vacuum
already has a universal spatial interval entropy determined by the
Majorana conformal field theory (CFT), whose central charge is $c=1/2$,
\begin{equation}
S_A^{(0)}=\frac16\log\frac{L}{a_{\rm UV}}+O(1),
\label{eq:vacuum-CFT}
\end{equation}
where $L$ is the interval length and $a_{\rm UV}$ is a short-distance
regulator \cite{CalabreseCardy2004}. The quantity of interest is
therefore the shutter-induced difference
\begin{equation}
\Delta S_A(a,L)=S_A(a,L)-S_A^{(0)}(L).
\label{eq:deltaS-intro}
\end{equation}
The same covariance spectrum also gives all Rényi entropies, allowing
the entanglement spectrum rather than only its von Neumann functional to
be tested.

The microscopic regularization considered here is a smooth finite-width
coupling,
\begin{equation}
H_{\rm int}(t)=\ii\int\dd x\,\lambda(t)w_a(x)\chi_R(x,t)\chi_L(x,t),
\qquad
w_a(x)=a^{-1}w(x/a),
\label{eq:width-intro}
\end{equation}
with $w$ smooth, compactly supported, even, nonnegative, and normalized
to unit integral. Here $a$ denotes the physical width of the shutter,
whereas $a_{\rm UV}$ denotes the ultraviolet spatial regulator used when
a continuum interval entropy is discussed. They are independent scales.
This construction has two useful features. First, its zero-frequency
transfer matrix can be evaluated exactly. Second, its finite-frequency
corrections vanish when the interaction becomes short compared with the
wavelength. These properties make the finite width a natural test of
whether a scattering universality class remains visible in entanglement
observables.

The central logical distinction of this paper is
\begin{equation}
\boxed{
\text{global infrared non-implementability}
\;\not\Rightarrow\;
\text{a determined spatial infrared entropy divergence}.
}
\label{eq:central-distinction}
\end{equation}
For frequency partitions the infrared sector can be isolated directly.
For a spatial interval, however, the relevant object is the covariance
after spatial restriction, and that extra operation can change the way
the infrared singularity enters the entropy.

We make six contributions. First, we give a self-contained finite-width
scattering construction, including the transfer matrix, its
zero-frequency limit, and the relation between the finite-width coupling
and the symmetric point-interaction parametrization. Second, we
formulate the Gaussian entanglement problem in a way that treats
anomalous pairing correctly and gives the complete Rényi family. Third,
we implement three partitions --- the two lattice chains, soft/hard
energy modes, and space --- and report a finite-lattice calculation for
several profiles. The lattice regulator is explicitly identified as a
doubled Majorana realization rather than a one-to-one discretization of
a single chiral continuum sector. Fourth, we formulate the continuum
extrapolation needed to turn the pilot calculation into an entanglement
universality test, and we state precisely which operator estimate is
missing. Fifth, we measure the many-body orthogonality exponent and
Loschmidt scaling, and use a controlled connected two-point-function
analysis as a consistency check on the ordinary free-fermion exponent.
Sixth, we make the connection to the conformal-defect literature
explicit and identify the shutter transmission amplitude at zero
frequency.

The paper is intentionally conservative about the continuum entropy
question. The numerical data establish that the finite-width Gaussian
problem can be evaluated and that its entanglement spectrum has a
nontrivial and partition-dependent response. They do not establish a
continuum large-$L$ exponent. This distinction is essential because
entropy is a nonlinear spectral functional, whereas the microscopic
universality statement is a one-particle scattering statement.

\section{Finite-width chiral shutter}
\label{sec:model}

\subsection{Free Majorana channels}

We consider two real chiral fields satisfying
\begin{equation}
(\partial_t+\partial_x)\chi_R=0,
\qquad
(\partial_t-\partial_x)\chi_L=0.
\label{eq:free-eom}
\end{equation}
The free Hamiltonian may be written, up to the conventional
normal-ordering prescription, as
\begin{equation}
H_0=\frac{\ii}{2}\int\dd x\,
\left(\chi_R\partial_x\chi_R-\chi_L\partial_x\chi_L\right).
\label{eq:H0}
\end{equation}
The equal-time normalization is
\begin{equation}
\{\chi_i(x,t),\chi_j(y,t)\}=\delta_{ij}\delta(x-y).
\label{eq:field-CAR}
\end{equation}
Because the fields are real, it is sufficient to integrate over
positive frequencies; the creation-operator terms supply the
negative-frequency components required by the Majorana reality
condition. A convenient positive-frequency expansion is
\begin{align*}
\chi_R(x,t)&=\int_0^\infty\frac{\dd\omega}{\sqrt{2\pi}}
\left[a_R(\omega)e^{-\ii\omega(t-x)}+a_R^\dagger(\omega)e^{\ii\omega(t-x)}\right],\\
\chi_L(x,t)&=\int_0^\infty\frac{\dd\omega}{\sqrt{2\pi}}
\left[a_L(\omega)e^{-\ii\omega(t+x)}+a_L^\dagger(\omega)e^{\ii\omega(t+x)}\right],
\end{align*}
with
\begin{equation*}
\{a_i(\omega),a_j^\dagger(\nu)\}=\delta_{ij}\delta(\omega-\nu),
\qquad
\{a_i(\omega),a_j(\nu)\}=0.
\end{equation*}

The coincident product in Eq.~\eqref{eq:width-intro} is understood
through the associated one-particle Dirac operator together with a UV
regulator, or equivalently by normal ordering. The scattering problem
below is finite-dimensional and does not require manipulating products
of operator-valued distributions directly.

\subsection{Profile class and finite width}

We assume
\begin{equation}
\int_{-\infty}^{\infty}w(u)\,\dd u=1,
\qquad
\supp w\subset[-1,1],
\qquad
w(u)=w(-u),
\qquad
w(u)\ge0.
\label{eq:profile}
\end{equation}
Then $w_a$ has support in $[-a,a]$ and converges weakly to a delta
function. The evenness assumption is convenient because it eliminates
a first-moment contribution to the low-frequency expansion. The profile
class is deliberately narrower than the set of all localized
interactions; sign-changing or resonant profiles can carry additional
low-energy structure and are not claimed to belong to the same
universality class.

For all numerical calculations we use three normalized members of this
class. With the normalization in Eq.~\eqref{eq:profile},
\begin{align}
 w_1(u)&\propto \exp[-1/(1-u^2)],\nonumber\\
 w_2(u)&\propto (1+u^2/2)\exp[-1/(1-u^2)],\nonumber\\
 w_3(u)&\propto (1-u^2)^4,
\label{eq:numerical-profiles}
\end{align}
for $|u|<1$, and all three profiles vanish outside that interval. The
proportionality constants are fixed by Eq.~\eqref{eq:profile}. The
labels $w_1,w_2,w_3$ used in the numerical tables therefore refer to
these specific normalized spatial profiles, not to different coupling
strengths.

\subsection{Time-dependent switch}

The coupling approaches constants at early and late times,
\begin{equation*}
\lambda(t)\to\lambda_-,\qquad t\to-\infty,
\qquad
\lambda(t)\to\lambda_+,\qquad t\to+\infty.
\end{equation*}
We use two switching protocols. By a \emph{sudden switch} we mean the
idealized instantaneous quench $\lambda(t)=\lambda_f\Theta(t)$,
used below to define the post-quench excitation content and the
ground-state overlap. By a \emph{smooth switch} we mean a finite-duration
change, for which the numerical protocol is
\begin{equation}
\lambda(t)=\frac{\lambda_f}{2}\left[1+\tanh(t/\tau)\right]
\label{eq:tanh-switch}
\end{equation}
so that $\lambda_-=0$ and $\lambda_+=\lambda_f$. Here $\tau$ is
the switching time. The static calculation determines the response
at fixed coupling, whereas the switched problem asks what happens when
that coupling changes while a wave packet traverses the interaction
region. We use the term ``sudden'' rather than ``sharp'' to distinguish
this idealized temporal quench from the finite spatial width $a$.
Two time scales therefore enter: the transit time $a$ (velocity set to
one) and the switching time $\tau$. We define the memory parameter
\begin{equation}
\epsilon_{\rm mem}\equiv\frac{a}{\tau}.
\label{eq:memory-parameter}
\end{equation}
The exact finite-width time-dependent scattering problem contains
memory because a wave packet spends a finite time crossing the
interaction region. The analysis of the time-dependent continuum
kernel in this paper therefore uses the \emph{zero-memory
approximation}, restricted to the regime in which $a/\tau$ is
parametrically small. The controlled numerical set uses $a=1,2,4$ and
$\tau=100$, giving $a/\tau=0.01$--$0.04$. These values define a small-memory regime for
the lattice diagnostics. We additionally perform a switching-time
scan at $\tau=50,100,200$ (Sec.~\ref{sec:tau-convergence}) which is used as an adiabatic-convergence check on the controlled
finite-cutoff response; it does not constitute a rigorous error bound
on the finite-memory continuum correction.

\section{Static scattering and long-wavelength universality}
\label{sec:scattering}

\subsection{First-order scattering system}

We first solve the static problem because it identifies the low-energy
quantity that will be held fixed when different finite-width profiles
are compared. The question is whether the detailed shape of $w(x)$
survives in the long-wavelength scattering matrix. For a static
coupling $\lambda$, use the harmonic ansatz
\begin{equation*}
\chi_R(x,t)=e^{-\ii\omega t}\psi_R(x),
\qquad
\chi_L(x,t)=e^{-\ii\omega t}\psi_L(x).
\end{equation*}
The first-order equations become
\begin{equation}
\partial_x
\begin{pmatrix}\psi_R\\\psi_L\end{pmatrix}
=
\begin{pmatrix}
\ii\omega&\lambda w_a(x)\\
\lambda w_a(x)&-\ii\omega
\end{pmatrix}
\begin{pmatrix}\psi_R\\\psi_L\end{pmatrix}.
\label{eq:first-order}
\end{equation}
Removing free propagation with
\begin{equation*}
\psi_R(x)=A(x)e^{\ii\omega x},
\qquad
\psi_L(x)=B(x)e^{-\ii\omega x},
\end{equation*}
leads to
\begin{equation}
\partial_x
\begin{pmatrix}A\\B\end{pmatrix}
=
\lambda w_a(x)
\begin{pmatrix}
0&e^{-2\ii\omega x}\\
e^{2\ii\omega x}&0
\end{pmatrix}
\begin{pmatrix}A\\B\end{pmatrix}.
\label{eq:AB}
\end{equation}

Define the transfer matrix
\begin{equation}
T_a(\omega)=
\mathcal P\exp\left[
\int_{-\infty}^{\infty}\dd x\,N_a(x;\omega)
\right],
\qquad
N_a(x;\omega)=\lambda w_a(x)
\begin{pmatrix}0&e^{-2\ii\omega x}\\e^{2\ii\omega x}&0\end{pmatrix}.
\label{eq:Tdef}
\end{equation}
The transfer matrix relates the asymptotic amplitude vectors at the
two sides of the interaction.

\subsection{Physical scattering matrix}

The Majorana reality structure, equivalently
$T_a(\omega)\sigma_1=\sigma_1\overline{T_a(\omega)}$, implies
\begin{equation*}
T_a(\omega)=
\begin{pmatrix}
p_a&q_a\\q_a^*&p_a^*\end{pmatrix},
\qquad
\det T_a=1.
\end{equation*}
The amplitudes $p_a(\omega)$ and $q_a(\omega)$ encode the diagonal and
channel-mixing parts of the transfer matrix. They are transfer-matrix
amplitudes rather than the physical scattering amplitudes themselves;
the latter are obtained from the asymptotic relation below. Writing
incoming amplitudes as $(I_R,I_L)$ and outgoing amplitudes as
$(O_R,O_L)$ gives
\begin{equation*}
\begin{pmatrix}O_R\\O_L\end{pmatrix}
=S_a(\omega)
\begin{pmatrix}I_R\\I_L\end{pmatrix},
\end{equation*}
with
\begin{equation}
S_a(\omega)=
\frac{1}{p_a^*}
\begin{pmatrix}
1&q_a\\-q_a^*&1
\end{pmatrix}.
\label{eq:SfromT}
\end{equation}
The identity $|p_a|^2-|q_a|^2=1$ implies $S_a^\dagger S_a=I$. For
$\omega\neq0$, $p_a$ and $q_a$ are generally complex, so $S_a(\omega)$
is not, in general, a literal real $SO(2)$ rotation; the $SO(2)$ form
is exact at $\omega=0$ and is recovered in the low-frequency limit.

\subsection{Exact zero-frequency result}

At $\omega=0$ the matrices in Eq.~\eqref{eq:Tdef} are all proportional
to $\sigma_1$ and therefore commute. Since $\int w_a=1$,
\begin{equation}
T_a(0)=e^{\lambda\sigma_1}
=
\begin{pmatrix}
\cosh\lambda&\sinh\lambda\\
\sinh\lambda&\cosh\lambda
\end{pmatrix}.
\label{eq:T0}
\end{equation}
Substitution into Eq.~\eqref{eq:SfromT} yields
\begin{equation}
S_a(0)=
\begin{pmatrix}
\sech\lambda&\tanh\lambda\\
-\tanh\lambda&\sech\lambda
\end{pmatrix}
=R(\theta_\lambda),
\label{eq:S0}
\end{equation}
where
\begin{equation}
\theta_\lambda=\arctan(\sinh\lambda).
\label{eq:theta}
\end{equation}
This result is exact for every width and every profile satisfying the
normalization condition. We therefore use $\theta_\lambda$, rather than
the bare coupling $\lambda$, as the physical matching variable between
different microscopic profiles. The finite-width universality statement
is that profiles with the same integrated coupling have the same
zero-frequency angle.

\subsection{Finite-frequency convergence}

The exact zero-frequency result does not by itself establish
universality at finite but small frequency. We therefore ask how
rapidly the finite-frequency scattering matrix approaches the common
rotation. Write
\begin{equation*}
N_a(x;\omega)=\lambda w_a(x)\sigma_1+R_a(x;\omega),
\end{equation*}
where
\begin{equation*}
R_a(x;\omega)=\lambda w_a(x)
\begin{pmatrix}
0&e^{-2\ii\omega x}-1\\
e^{2\ii\omega x}-1&0
\end{pmatrix}.
\end{equation*}
For $|x|\le a$,
\begin{equation*}
e^{\pm2\ii\omega x}-1=\pm2\ii\omega x+O(\omega^2x^2).
\end{equation*}
For an even normalized profile the first Dyson contribution is
actually $\widehat w(2a\omega)-1=O(a^2\omega^2)$: the linear moment
vanishes. The leading $O(a|\omega|)$ term comes instead from the second
Dyson contribution, because the matrices $N_a(x;\omega)$ at different
positions do not commute. After rescaling $x=au$ and $y=av$, its
leading commutator contribution is proportional to $a\omega$ times a
finite profile functional such as
$\int du\,dv\,w(u)w(v)|u-v|$. Thus the linear finite-width correction
is a genuine ordered-scattering (Wigner-delay-type) effect rather than
a first-moment contribution. The remaining Dyson terms give a bound of
the form
\begin{equation}
\norm{T_a(\omega)-e^{\lambda\sigma_1}}_{\Frob}
\le C_1 a|\omega|+C_2a^2\omega^2
\label{eq:Tbound}
\end{equation}
on compact frequency intervals. Since the map from $T$ to $S$ is
locally Lipschitz near the zero-frequency transfer matrix,
\begin{equation}
\boxed{
\norm{S_a(\omega)-R(\theta_\lambda)}_{\Frob}
\le C a|\omega|,
}
\label{eq:Sbound}
\end{equation}
for fixed coupling and sufficiently small $a|\omega|$. The constant depends on the
chosen coupling and profile, $C=C(\lambda,w)$. Thus the
physically relevant comparison variable for different profiles is the
asymptotic angle $\theta_\lambda$, not the bare finite-width coupling
by itself.

\begin{proposition}[Finite-width long-wavelength universality]
For the profile class in Eq.~\eqref{eq:profile}, the static scattering
matrix has the zero-width limit $S_a(\omega)\to R(\theta_\lambda)$
uniformly on compact frequency intervals, with the estimate in
Eq.~\eqref{eq:Sbound}.
\end{proposition}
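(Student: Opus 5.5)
The plan is to compare the ordered exponential $T_a(\omega)$ with the commuting zero-frequency product $e^{\lambda\sigma_1}$ from Eq.~\eqref{eq:T0} by a Duhamel (variation-of-constants) identity, rather than by summing the Dyson series term by term. Let $U(x)$ denote the path-ordered propagator of $N_a(\cdot;\omega)$ from $-a$ to $x$, and $U_0(x)=\exp[\lambda\sigma_1\int_{-a}^x w_a]$ the propagator of $\lambda w_a\sigma_1$. Since $N_a=\lambda w_a\sigma_1+R_a$, we have
\begin{equation*}
T_a(\omega)-e^{\lambda\sigma_1}=\int_{-a}^{a}\dd x\;U_0(a)U_0(x)^{-1}\,R_a(x;\omega)\,U(x).
\end{equation*}
Both propagators are bounded uniformly in $a$ and $\omega$. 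The reason is that $\norm{N_a(x;\omega)}\le|\lambda|w_a(x)$ because the off-diagonal phases are unimodular, so Gr\"onwall gives $\norm{U(x)},\norm{U_0(x)}\le e^{|\lambda|\int w_a}=e^{|\lambda|}$. For $|x|\le a$ we also have $\norm{R_a(x;\omega)}\le|\lambda|w_a(x)\,|e^{2\ii\omega x}-1|\le 2|\lambda|\,|\omega|\,a\,w_a(x)$. Inserting these into the Duhamel formula gives $\norm{T_a(\omega)-e^{\lambda\sigma_1}}_{\Frob}\le C_1(\lambda)\,a|\omega|$, with $C_1$ proportional to $|\lambda|e^{2|\lambda|}$. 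This bound holds for all $\omega$, not merely on compact intervals, and it already implies Eq.~\eqref{eq:Tbound}. The finer structure mentioned in the text (the vanishing first moment, and the $|u-v|$ commutator term) is not needed for the bound. It would only be needed to identify the constant.

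Next I would pass from $T$ to $S$ using Eq.~\eqref{eq:SfromT}. The map $T=\begin{pmatrix}p&q\\q^*&p^*\end{pmatrix}\mapsto S$ is the rational function $(p,q)\mapsto \frac{1}{p^*}\begin{pmatrix}1&q\\-q^*&1\end{pmatrix}$. It is smooth wherever $p\neq0$, and the identity $|p|^2-|q|^2=1$ forces $|p|\ge1$. This makes the map globally Lipschitz in $(p,q)$ on the relevant set. Explicitly, $|1/p^*-1/p_0^*|\le|p-p_0|$ when $|p|,|p_0|\ge1$, and
\begin{equation*}
|q/p^*-q_0/p_0^*|\le |q-q_0|+|q_0|\,|p-p_0|,
\end{equation*}
with $|q_0|=|\sinh\lambda|$. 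Taking $(p_0,q_0)=(\cosh\lambda,\sinh\lambda)$ and using Eq.~\eqref{eq:S0} then gives Eq.~\eqref{eq:Sbound} with $C=C(\lambda)$. Uniform convergence $S_a\to R(\theta_\lambda)$ on any compact interval $|\omega|\le\Omega$ follows, since $\sup_{|\omega|\le\Omega}\norm{S_a-R(\theta_\lambda)}\le C a\Omega\to0$ as $a\to0$.

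There are two points of care. The first is to confirm that the structure of $T_a$ stated before Eq.~\eqref{eq:SfromT} holds for every $a$ and $\omega$, in particular $\det T_a=1$ and the reality form; this is needed so that the bound $|p_a|\ge1$ applies. Here $\det T_a=1$ follows because $N_a$ is traceless. The reality form follows because $\sigma_1\overline{N_a}\sigma_1=N_a$, which makes $\sigma_1\overline{T_a}\sigma_1=T_a$. Together with $\det T_a=1$ this gives $|p_a|^2-|q_a|^2=1$. The second point concerns the propagator estimate. I expect the main obstacle to be making sure that nothing in it secretly depends on $w$ beyond normalization. Positivity of $w$ enters through $\int|w_a|=1$; without it one would pick up $\int|w|$ instead. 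The Duhamel bound is otherwise clean, so I would check explicitly that the constant depends only on $\lambda$. It is consistent with the proposition's $C(\lambda,w)$, which may be sharpened but does not need to be.
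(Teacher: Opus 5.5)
Your proof is correct, but it takes a different route from the paper's.

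\textbf{The paper's route.} The paper expands the Dyson series term by term. By evenness, the first-order term is $\lambda[\widehat w(2a\omega)-1]\sigma_1=O(a^2\omega^2)$. The sharp $O(a|\omega|)$ piece comes from the second-order ordered product, a Wigner-delay-type term proportional to $a\omega\int\!\!\int w(u)w(v)|u-v|\,\dd u\,\dd v$. The remaining terms are then stated, without explicit tail control, to give Eq.~\eqref{eq:Tbound} on compact intervals. The step from $T$ to $S$ uses only smoothness of the $T\mapsto S$ map in a neighbourhood of $e^{\lambda\sigma_1}$, which is why Eq.~\eqref{eq:Sbound} is stated only for sufficiently small $a|\omega|$.

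\textbf{Your route.} Your Duhamel identity controls all orders at once, using the Gr\"onwall bound $e^{|\lambda|}$ on both propagators and $\norm{R_a}\le 2|\lambda||\omega|a\,w_a$. Your Lipschitz estimate rests on $|p_a|\ge1$, which you correctly derive from the reality structure and $\det T_a=1$, so it is global rather than local. You therefore obtain Eq.~\eqref{eq:Sbound} for every real $\omega$, with no smallness condition on $a|\omega|$ and with a constant depending only on $\lambda$. You also show that evenness and smoothness of $w$ are not needed for the rate: $w\ge0$, $\int w=1$ and $\supp w\subset[-1,1]$ suffice.

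\textbf{What each buys.} The paper's expansion gives structural information your bound does not. It locates the leading linear correction in the second-order ordered term rather than in a first moment, and it identifies the profile-dependent coefficient of that correction. It thereby indicates that the rate $a|\omega|$ is sharp and that the optimal constant genuinely depends on $w$.
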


\begin{proof}
At zero frequency the path ordering disappears and Eq.~\eqref{eq:T0}
is exact. For nonzero frequency the phase factors differ from unity
only over the support $|x|\le a$. Expanding the Dyson series, the
first-order contribution starts at $O(a^2\omega^2)$ for an even
profile, while the second-order noncommutator produces the sharp
$O(a|\omega|)$ term. This gives Eq.~\eqref{eq:Tbound}; the
transfer-to-scattering map is smooth in a neighborhood of
$e^{\lambda\sigma_1}$, which yields Eq.~\eqref{eq:Sbound}.
\end{proof}

\subsection{Coupling parametrization}

The point-interaction description is often written with a
symmetric-average coupling $g$. The angle $\theta_\lambda$ is the invariant quantity used for
comparison, so we now match the two parametrizations by equating their
scattering angles:
\begin{equation}
\lambda=2\artanh\left(\frac g2\right),
\qquad |g|<2,
\label{eq:gmatch}
\end{equation}
and therefore
\begin{equation}
\theta_\lambda
=2\arctan\left(\frac g2\right).
\label{eq:anglematch}
\end{equation}
Indeed,
\begin{equation*}
\sinh\lambda=\frac{g}{1-g^2/4}
=\tan\left[2\arctan\left(\frac g2\right)\right].
\end{equation*}
This is a parameter redefinition, not an additional physical
assumption. It makes explicit how the smooth finite-width realization
approaches the chosen point-interaction convention. The symmetric-average
point-interaction prescription is the standard self-adjoint matching
framework for one-dimensional contact defects; our normalization of $g$
is fixed explicitly by Eq.~\eqref{eq:gmatch} rather than imported as an
independent physical parameter \cite{Albeverio1988}.

\section{Time-dependent infrared memory}
\label{sec:IR}

\subsection{Instantaneous or zero-memory description}

The static calculation tells us what the shutter does at each fixed
coupling. When the coupling is switched between two asymptotic values,
however, the positive-frequency decomposition at early and late times
need not agree. This mismatch produces anomalous Bogoliubov mixing.
The purpose of this section is to quantify that infrared memory and to
identify its coefficient before turning to subsystem entanglement.

The zero-memory approximation is used only in the
long-wavelength/slow-switching regime
\begin{equation}
|\omega|a\ll1,\qquad a/\tau\ll1,\qquad\mu\ll\tau^{-1},
\label{eq:memory-regime}
\end{equation}
with the infrared cutoff $\mu$ retained until the Bogoliubov integral
is evaluated. We then use
\begin{equation*}
S_a(t;\omega)=S_a(\omega)|_{\lambda=\lambda(t)}.
\end{equation*}
The low-frequency limit is
\begin{equation*}
S_a(t;\omega)\to R[\theta(t)],
\qquad
\theta(t)=\arctan[\sinh\lambda(t)].
\end{equation*}
The exact finite-memory correction is not calculated here; the
approximation is part of the definition of the time-dependent
continuum model used for the infrared analysis. The dimensionless
parameter $\epsilon_{\rm mem}=a/\tau$ is therefore a control
parameter, not an error estimate: for the controlled lattice runs it
lies between $0.01$ and $0.04$, but without an explicit finite-memory
kernel calculation there is no rigorous coefficient multiplying this
small parameter. In particular, we do not claim an $O(a/\tau)$
numerical bound on the Bogoliubov kernel.

The anomalous Bogoliubov kernel has the form
\begin{equation}
\beta_a(\omega,\nu)=
\frac{1}{2\pi}
\int_{-\infty}^{\infty}\dd t\,
 e^{\ii(\omega+\nu)t}S_a(t;\omega),
\qquad \omega,\nu>0,
\label{eq:beta}
\end{equation}
with the asymptotic constant pieces understood in the distributional
sense. The infrared part is controlled by the change in scattering
matrix across the switch,
\begin{equation*}
\Delta S=S_+-S_-.
\end{equation*}
For the pointlike limit,
\begin{equation}
\beta(\omega,\nu)
\sim
-\frac{\Delta S}{2\pi\ii(\omega+\nu)}
\label{eq:betaIR}
\end{equation}
for small positive frequencies. Consequently,
\begin{equation}
\norm{\beta_{\mu,\omega_*}}_{\HS}^2
=
\frac{\norm{\Delta S}_{\Frob}^2}{4\pi^2}
\log\frac{\omega_*}{\mu}+O(1),
\label{eq:HSlog}
\end{equation}
where $\mu>0$ is an infrared frequency cutoff and $\omega_*$ is a
fixed upper frequency chosen within the low-frequency window in which
the asymptotic form of $\beta$ is valid. Neither is the ultraviolet
lattice spacing $a_{\rm UV}$.

For rotations $S_\pm=R(\theta_\pm)$,
\begin{equation}
\norm{R(\theta_+)-R(\theta_-)}_{\Frob}^2
=8\sin^2\frac{\Delta\theta}{2},
\qquad
\Delta\theta=\theta_+-\theta_-.
\label{eq:rotation-distance}
\end{equation}
Thus the coefficient of the infrared logarithm is
\begin{equation}
\boxed{\Cqp\equiv
\lim_{\mu\to0}
\frac{\norm{\beta_{\mu,\omega_*}}_{\HS}^2}{\log(\omega_*/\mu)}}
=\frac{\norm{\Delta S}_{\Frob}^2}{4\pi^2}
=\frac{2}{\pi^2}\sin^2\frac{\Delta\theta}{2}.
\label{eq:Cqp}
\end{equation}
We call $C_{\rm qp}$ the quasiparticle-memory coefficient. It
characterizes the global infrared non-implementability of the
Bogoliubov transformation; it is not an entropy coefficient and will
not be identified with a spatial entanglement coefficient below. Here
the Hilbert--Schmidt norm is the ordinary Hilbert--Schmidt norm,
$\|A\|_{\rm HS}^2=\operatorname{Tr}(A^\dagger A)$, on the
complexified two-component Majorana scattering space. With this
convention, $C_{\rm qp}\simeq(\Delta\theta)^2/(2\pi^2)$ for small
mismatch.

\subsection{Finite-width stability of the infrared coefficient}

The finite-width scattering estimate gives
\begin{equation*}
\norm{S_a(t;\omega)-R[\theta(t)]}_{\Frob}
\le Ca|\omega|.
\end{equation*}
Under a smooth switch with $\dot\lambda\in L^1$, integration by parts
in the Fourier transform yields, on a fixed positive-frequency window,
\begin{equation}
\norm{\beta_a-\beta_0}_{\HS}=O(a).
\label{eq:HSdifference}
\end{equation}
The statement is uniform for a fixed lower cutoff $\mu>0$ and, in the
difference estimate, the common logarithmic divergence cancels as
$\mu\to0$. It follows that the leading coefficient in
Eq.~\eqref{eq:HSlog} is unchanged by the finite width within the
zero-memory approximation.

\begin{remark}
Hilbert--Schmidt convergence does not by itself prove stability of
every spectral quantity. In particular, an Anderson-orthogonality overlap exponent
involves a Fredholm determinant and requires additional spectral
information. We therefore use the global infrared exponent only as
established input, while keeping entropy claims separate.
\end{remark}

\section{Gaussian entanglement formalism}
\label{sec:gaussian}

\subsection{Majorana covariance}

The infrared coefficient $C_{\rm qp}$ characterizes the global
Bogoliubov transformation, but it does not by itself determine a
subsystem entropy. To calculate entanglement we need the complete
Gaussian state, including anomalous correlations. The Majorana
covariance matrix packages the normal and anomalous sectors into one
real antisymmetric object and is the representation used below.

Let $\gamma_j=\gamma_j^\dagger$ satisfy
\begin{equation*}
\{\gamma_j,\gamma_k\}=2\delta_{jk}.
\end{equation*}
For a Gaussian state define
\begin{equation}
\Gamma_{jk}=\frac{\ii}{2}\langle[\gamma_j,\gamma_k]\rangle.
\label{eq:Gamma}
\end{equation}
Then $\Gamma$ is real antisymmetric. For a pure state,
\begin{equation*}
\Gamma^2=-I.
\end{equation*}
For comparison with the usual complex-fermion notation, whenever a
pair of Majorana modes is grouped into a complex fermion we write
$c_j=(\gamma_{2j-1}+\ii\gamma_{2j})/2$. Its normal and anomalous
correlators are
\begin{equation*}
C_{ij}=\langle c_i^\dagger c_j\rangle,
\qquad
F_{ij}=\langle c_i c_j\rangle.
\end{equation*}
These blocks are basis-dependent representations of the same Gaussian
information contained invariantly in $\Gamma$; the numerical entropy
calculation uses $\Gamma$ directly.
If the quadratic Hamiltonian is
\begin{equation*}
H(t)=\frac{\ii}{4}\gamma^T A(t)\gamma,
\qquad A^T=-A,
\end{equation*}
then the orthogonal evolution $O(t)$ obeys
\begin{equation}
\dot O=A O,
\qquad
\Gamma(t)=O(t)\Gamma(t_0)O(t)^T.
\label{eq:Gammaevolve}
\end{equation}
This is the numerical workhorse used below.

\subsection{Restriction and entropy spectrum}

For a subsystem $A$, let $P_A$ denote the orthogonal projector from
the full Majorana one-particle space onto the Majorana modes belonging
to $A$. The reduced covariance is the principal restriction
\begin{equation}
\boxed{\Gamma_A=P_A\Gamma P_A.}
\label{eq:principal}
\end{equation}
It is not a Schur complement. The eigenvalues of $\ii\Gamma_A$ occur
in pairs $\pm\xi_j$, with $0\le\xi_j\le1$; the symbols $\xi_j$ denote
the positive covariance singular values and are distinct from the
shutter coupling $\lambda$. The von Neumann entropy is
\begin{equation*}
S_A=\sum_j s(\xi_j),
\end{equation*}
where
\begin{equation}
s(\lambda)=
-\frac{1+\lambda}{2}\log\frac{1+\lambda}{2}
-\frac{1-\lambda}{2}\log\frac{1-\lambda}{2}.
\label{eq:svN}
\end{equation}
The sum is over positive singular values.

This covariance formulation is standard for fermionic Gaussian
systems and avoids constructing the exponentially large reduced
density matrix explicitly \cite{Peschel2003,PeschelEisler2009,Bravyi2005,BoteroReznik2004}. The
full Rényi family is
\begin{equation}
\begin{aligned}
S_A^{(\alpha)}={}&
\frac{1}{1-\alpha}
\sum_j
\log\left[
\left(\frac{1+\xi_j}{2}\right)^\alpha
+\left(\frac{1-\xi_j}{2}\right)^\alpha
\right],\\
&\hspace{3.5em}\alpha>0,\quad \alpha\ne1.
\end{aligned}
\label{eq:Renyi}
\end{equation}
with the von Neumann result recovered as $\alpha\to1$. For
$\alpha=\infty$,
\begin{equation*}
S_A^{(\infty)}=-\sum_j\log\frac{1+\xi_j}{2}.
\end{equation*}

The crucial point for this problem is that an anomalous Gaussian
state cannot be characterized by the number-conserving correlation
matrix alone. The pairing sector contributes to the same entanglement
spectrum and is automatically retained by $\Gamma$.

\subsection{Rényi diagnostics}

If a future continuum calculation establishes an ordinary effective
conformal-field-theory logarithm, its Rényi coefficients must have the
characteristic form
\begin{equation}
S_A^{(\alpha)}\sim
\frac{c_{\rm eff}}{6}\left(1+\frac1\alpha\right)\log L+\cdots.
\label{eq:CFTrenyi}
\end{equation}
This relation is the standard CFT consistency form for a single interval
and is recorded here only as a consistency condition for a future
continuum extrapolation \cite{CalabreseCardy2004}. The present finite-lattice data are
not sufficient to extract a reliable $c_{\rm eff}$ or to test the
predicted $1+1/\alpha$ dependence. We therefore do not use the Rényi
spectra below to rule in or rule out an effective-CFT description;
see also the numerical disclaimer in Sec.~\ref{sec:numerics}.

\section{Three entanglement partitions}
\label{sec:partitions}

The same Gaussian state supports several inequivalent bipartitions.
We use them to ask three different questions: the two-chain partition
measures lattice channel entanglement, the soft/hard partition
isolates correlations across a frequency cutoff, and the spatial
partition tests whether global infrared memory survives restriction
to a local region. Each uses a different subsystem projector $P_A$
on the same evolved covariance, so the resulting entropies should not
be conflated.

\subsection{Two-chain (chain) partition}

The chain partition separates the two lattice chains. It should not
be identified literally with a single continuum chiral sector,
because each lattice chain contains both low-energy chiralities. In
the free vacuum the two chiral sectors are independent, so
\begin{equation*}
S_{R,\,\rm vac}^{(\alpha)}=0.
\end{equation*}
After the shutter, pairing between the two chains makes the chain
entropy nonzero. It measures lattice channel entanglement generated
by the shutter, but it is not by itself a direct numerical
measurement of the continuum counter-propagating partition.

\subsection{Soft/hard energy partition}

In the continuum, define a physical frequency cutoff $\Omega_c>0$
and the corresponding sectors
\begin{equation}
A_{\rm soft}=\{0<\omega<\Omega_c\},\qquad
A_{\rm hard}=\{\omega>\Omega_c\}.
\label{eq:soft-hard-def}
\end{equation}
The continuum variable is therefore $\Omega_c$, not the integer
number of lattice shells. The vacuum is a product across independent
free-energy modes, so its soft/hard entropy vanishes. The shutter
couples positive frequencies through the anomalous kernel, and the
resulting soft/hard entropy is controlled by the low-frequency
singular structure.

The numerical implementation requires care because a finite lattice
has discrete energies. We therefore define the soft sector by
complete positive-energy shells of the free lattice Hamiltonian. We
use $k$ for the total number of retained positive-energy shells
across the two chains, so that $k/2$ complete shells are retained in
each chain. For the open-chain convention used here, the
single-chain shell energies are
\begin{equation}
\varepsilon_m=2\sin\left[\frac{\pi(2m-1)}{2(N+1)}\right],\qquad m=1,2,\ldots .
\label{eq:shell-energy}
\end{equation}
Thus the physical cutoff associated with an even $k$ is
$\Omega_c=\varepsilon_{k/2}$. At $N=128$, the first four shell
energies are $0.0244,0.0730,0.1217,0.1703$, so the paper's $k=8$
convention corresponds to four shells per chain and
$\Omega_c\simeq0.1703$. Complete shells are important because they
preserve the two-chain degeneracy of the free problem and avoid an
artificial channel imbalance in the subsystem definition.

This integer-shell entropy is a well-defined finite-lattice
observable. It should not be confused with a continuum entropy at a
fixed sharp frequency cutoff until the lattice dispersion and shell
spacing have been extrapolated.

\subsection{Spatial interval}

The spatial partition is the only one here that is local in real
space. For a spatial interval $A=[0,L]$, the subsystem contains both
chiralities at sites or positions lying inside the interval. The
vacuum entropy is the Majorana CFT result in
Eq.~\eqref{eq:vacuum-CFT}. We study
\begin{equation}
\Delta S_A^{(\alpha)}(L)=S_A^{(\alpha)}(\lambda_f)-S_A^{(\alpha)}(0).
\label{eq:deltaRenyi}
\end{equation}
The spatial partition is qualitatively different from the first two:
it does not select soft modes directly. Its infrared behavior must
therefore be inferred from the restricted covariance rather than
from the Hilbert--Schmidt norm of $\beta$.

\section{Analytic structure of the spatial problem}
\label{sec:analytic-spatial}

The global coefficient $C_{\rm qp}$ contains only an integrated
infrared norm. To determine what a spatial interval actually sees, we
must instead examine the position-space kernel and then follow it
through covariance restriction and the entropy functional. The
anomalous kernel in Eq.~\eqref{eq:betaIR} has a long-range
position-space transform. Schematically,
\begin{equation}
\widetilde\beta(x,y)
\sim
\frac{\log(x/y)}{x-y},
\label{eq:position-kernel}
\end{equation}
up to Fourier-convention phases, ultraviolet contact terms, and the
regular part of the kernel. This expression is useful for diagnosing
the range of the perturbation, but it is not itself an entropy
formula.

The covariance contains both normal and anomalous pieces. In the
complex basis defined in Sec.~\ref{sec:gaussian}, they are represented
by $C_{ij}$ and $F_{ij}$; in the Majorana basis the entire information
is already contained in $\Gamma$. The spatial entropy therefore follows
the chain
\begin{equation}
\beta
\longrightarrow \Gamma
\longrightarrow \Gamma_A
\longrightarrow\{\xi_j\}
\longrightarrow S_A^{(\alpha)}.
\label{eq:entropy-chain}
\end{equation}
A divergence at an early stage need not remain divergent after
restriction and nonlinear spectral processing.

A useful conditional statement is the following. Suppose, at fixed
ultraviolet regulator and fixed interval, the finite-width restricted
covariance obeys a trace-norm estimate
\begin{equation}
\norm{\Gamma_A^{(a)}-\Gamma_A^{(0)}}_1\le \varepsilon(a),
\qquad \varepsilon(a)\to0.
\label{eq:tracebound}
\end{equation}
Then standard continuity estimates for finite-dimensional Gaussian
density matrices imply convergence of the corresponding entropy. The
missing mathematical step in the continuum is therefore not the
static scattering limit but a suitable operator-norm or trace-norm
control after spatial restriction.

This observation is why the present paper does not identify a spatial
large-$L$ asymptotic from the global infrared coefficient. We do not
assume a particular continuum functional form for
$\Delta S_A^{(\alpha)}(L)$: determining it requires a
uniform-in-$L$ estimate for the restricted covariance, followed by a
controlled lattice-spacing/infrared extrapolation. The present
manuscript states this missing operator-theoretic step explicitly and
does not use the finite-cutoff data to select an asymptotic exponent.

\subsection{Weak-shutter spatial stability at finite ultraviolet cutoff}
\label{sec:weakbound}

The spatial problem admits a controlled statement at any fixed
ultraviolet cutoff, even though the continuum large-$L$ coefficient
is not determined by the present calculation. Write
\begin{equation*}
\Gamma_A(\eta)=\Gamma_A^{(0)}+\eta\,\delta\Gamma_A+O(\eta^2),
\qquad \eta\equiv\Delta\theta,
\end{equation*}
where the zero-memory kernel gives $F=O(\eta)$ and $C=O(\eta^2)$.
Let $U_L$ be the unitary that implements the sign change
$\chi_L(x)\mapsto-\chi_L(x)$. The free Hamiltonian is invariant under
$U_L$, while the shutter interaction changes sign. Hence the state
produced with angle change $-\Delta\theta$ is unitarily related to
the state produced with $+\Delta\theta$. For a spatial interval, the
subsystem contains both chiralities, so its projector commutes with
this transformation. The reduced density matrices are therefore
unitarily equivalent and have the same spectrum. Consequently every
spatial Rényi entropy is an even function of $\Delta\theta$, and the
linear term vanishes:
\begin{equation}
\left.\frac{d S_A^{(\alpha)}}{d\eta}\right|_{\eta=0}=0.
\label{eq:firstlawzero}
\end{equation}
The same local-unitary argument applies to every partition
considered here. If $U_L$ is the unitary implementing
$\chi_L\mapsto-\chi_L$, then for any subsystem $A$ it factors as
$U_L=U_A\otimes U_{\bar A}$, so the reduced states at
$+\Delta\theta$ and $-\Delta\theta$ are unitarily equivalent. For the
even-$N$ lattices used below, $U_L=\prod_j\gamma_{L,j}$ is bosonic, so
this factorization involves no additional fermionic-grading subtlety.
Consequently, for the spatial, chain, and soft/hard partitions,
\begin{equation}
S_A^{(\alpha)}(+\Delta\theta)=S_A^{(\alpha)}(-\Delta\theta),\qquad\alpha>0.
\label{eq:generalparity}
\end{equation}
Thus every fixed-cutoff entropy in these partitions has an even
weak-shutter expansion. For the spatial interval, the von Neumann
first-law check gives the same result: up to a state-independent
additive constant,
\begin{equation*}
\delta S_A=\delta\langle K_A^{(0)}\rangle,\qquad K_A^{(0)}=2\pi\int_0^L dx\,\frac{x(L-x)}{L}T_{00}(x),
\end{equation*}
and the leading $R$--$L$ pairing changes off-diagonal contractions
but not the diagonal $RR/LL$ contractions entering $T_{00}$. The first-law
identity follows from the vanishing of the first variation of relative entropy
around the reference state \cite{BlancoCasiniHungMyers2013}. Hence
$\delta\langle T_{00}\rangle=0$ at first order. We therefore write,
for every fixed-cutoff partition,
\begin{equation}
\Delta S_A^{(\alpha)}=Q_\alpha\,\Delta\theta^2+O(\Delta\theta^4),
\qquad
Q_\alpha\equiv\frac12\left.\frac{\partial^2S_A^{(\alpha)}}{\partial(\Delta\theta)^2}\right|_{\Delta\theta=0}.
\label{eq:weakquadratic}
\end{equation}
For the spatial partition we retain the notation
$Q_\alpha(L,a_{\rm UV})$ to emphasize that its large-$L$ behavior
remains an open continuum question. For completeness, the
finite-dimensional entropy stability needed for the finite-width
limit can be stated without assuming a continuum entropy formula.
Let $\rho_A^{(a)}$ and $\rho_A^{(0)}$ be the reduced Gaussian density
matrices at fixed lattice cutoff. If
\begin{equation*}
\delta_A\equiv\frac12\|\rho_A^{(a)}-\rho_A^{(0)}\|_1<1,
\end{equation*}
then the Fannes--Audenaert inequality gives
\begin{equation}
|S_A^{(a)}-S_A^{(0)}|
\le \delta_A\log(d_A-1)-\delta_A\log\delta_A-(1-\delta_A)\log(1-\delta_A),
\label{eq:FA}
\end{equation}
with $d_A=2^{|A|/2}$. Since a finite-dimensional Gaussian density
matrix is a continuous function of its covariance, a sufficient
covariance criterion is
\begin{equation}
\|\Gamma_A^{(a)}-\Gamma_A^{(0)}\|_1\longrightarrow0
\quad\Longrightarrow\quad
\delta_A\longrightarrow0.
\label{eq:covcontinuity}
\end{equation}
The on-shell scattering estimate suggests such a bound on a finite
set of modes, but the passage from frequency-space scattering to a
position-space trace-norm estimate for a restricted covariance
requires an additional operator argument that is not supplied here.
We therefore do not promote
\begin{equation}
\|\Gamma_A^{(a)}-\Gamma_A^{(0)}\|_1=O(a\Omega)
\label{eq:covbound}
\end{equation}
to a theorem. It is a sufficient target for a future continuum
proof. What is rigorous at the present level is the finite-dimensional
Fannes--Audenaert continuity statement once convergence of the
reduced density matrices is independently established. This
distinction prevents the static scattering estimate from being
silently promoted into a uniform continuum entropy theorem.

Because $\log d_A$ grows proportionally to the number of lattice
degrees of freedom in the interval, this continuity estimate is
useful only at fixed ultraviolet cutoff and fixed $L$. It is not a
uniform large-$L$ bound. A continuum result would require stronger
control of the restricted covariance, for example a trace-class
estimate whose constants remain controlled as the interval grows.
The result is therefore deliberately weaker than an asserted
continuum exponent: it states precisely what additional operator
estimate is needed before the finite-width scattering theorem can be
promoted to a uniform continuum entropy statement.

\subsection{Entanglement across the shutter: conformal-defect test}
\label{sec:defect-interpretation}

\paragraph{Effective central charge of a free-fermion defect.}
For a free-fermion chain with a single conformal defect of transmission
amplitude $s\in[0,1]$, the entropy of a subsystem whose boundary point
coincides with the defect is
\begin{equation}
S^{\rm defect}=\frac{c_{\rm eff}(s)}{6}\log L+O(1),
\label{eq:ceff-defect}
\end{equation}
with the effective central charge of Eisler and Peschel
\cite{EislerPeschel2010,Peschel2012,SakaiSatoh2008},
\begin{equation}
c_{\rm eff}(s)=-\frac{6}{\pi^2}\Big\{
[(1+s)\ln(1+s)+(1-s)\ln(1-s)]\ln s
+(1+s)\,\mathrm{Li}_2(-s)+(1-s)\,\mathrm{Li}_2(s)\Big\}.
\label{eq:ceff-peschel}
\end{equation}
It satisfies $c_{\rm eff}(0)=0$ and $c_{\rm eff}(1)=1$. For small $s$,
\begin{equation}
c_{\rm eff}(s)\simeq\frac{6}{\pi^2}s^2\left(\ln\frac1s+\frac32\right).
\end{equation}
For a single Majorana chain the corresponding central-charge contribution
is one half of the Dirac-fermion expression, whereas the doubled lattice
used here contains two Majorana copies and therefore realizes the full
expression above.

\paragraph{Where the defect formula applies.}
The defect modifies the leading logarithmic coefficient when the
entanglement cut passes through the defect. If the defect lies in the
interior of a finite interval, the two entanglement cuts remain in the
homogeneous bulk; the leading logarithm is then the vacuum one. Thus the
conformal-defect formula is not a leading-log formula for an interval that
merely contains the shutter.

\paragraph{Effective transmission of the shutter.}
The zero-frequency scattering matrix, Eq.~\eqref{eq:S0}, has diagonal
entries $\sech\lambda$. The static shutter therefore has transmission
amplitude
\begin{equation}
s_{\rm shutter}=\sech\lambda_f=\cos\theta_{\lambda_f},
\label{eq:shutter-transmission}
\end{equation}
and the corresponding half-chain prediction for the doubled lattice is
\begin{equation}
\Delta S_{\rm half}(N)=S(\lambda_f)-S(0)
=\frac{c_{\rm eff}(\sech\lambda_f)-1}{6}\ln N+O(1).
\label{eq:half-chain-prediction}
\end{equation}
The subtraction of $1$ is the vacuum logarithmic coefficient of the doubled
lattice, whose total central charge is $c=1$.

\paragraph{Numerical test.}
We place the entanglement cut at the shutter: the subsystem is the left
half ($N/2$ sites) of both chains, the profile $w_1$ is centred on the cut,
and the state is the ground state of the final Hamiltonian. The slow
$\tau=100$ evolution reproduces this state to $10^{-4}$ in the covariance
observables used here. The slope of $\Delta S_{\rm half}$ against $\ln N$
is obtained from $N=512,1024,2048$. Table~\ref{tab:ceff-test} compares it
with Eq.~\eqref{eq:half-chain-prediction}.
\begin{table}[H]
\centering
\small
\begin{tabular}{cccccc}
\toprule
$\lambda_f$ & $a$ & $s=\sech\lambda_f$ & measured slope & prediction & ratio\\
\midrule
0.25&2&0.9695&$-0.00755$&$-0.00757$&0.998\\
0.5 &2&0.8868&$-0.02736$&$-0.02762$&0.990\\
1.0 &2&0.6481&$-0.07860$&$-0.08084$&0.972\\
1.5 &2&0.4251&$-0.11820$&$-0.12269$&0.963\\
2.0 &2&0.2658&$-0.14118$&$-0.14628$&0.965\\
3.0 &2&0.0993&$-0.15969$&$-0.16285$&0.981\\
1.5 &4&0.4251&$-0.12175$&$-0.12269$&0.992\\
3.0 &4&0.0993&$-0.16246$&$-0.16285$&0.998\\
\bottomrule
\end{tabular}
\caption{Slope of the shutter-induced half-chain entropy against $\ln N$
compared with the parameter-free conformal-defect prediction for the
 doubled lattice. The measured slopes follow the prediction throughout
the tested range $s\simeq0.10$--$0.97$. Deviations of a few percent for
$a=2$ are reduced for the wider shutter $a=4$.}
\label{tab:ceff-test}
\end{table}
The result provides a parameter-free test of the relation between the
zero-frequency scattering data and entanglement across the defect. For
$M=N/4$, centred on the shutter with $\lambda_f=1$ and $a=2$, the
correction $\Delta S_1$ is
$-0.0043,-0.0021,-0.0011,-0.0005,-0.0003$ for
$N=128,256,512,1024,2048$. It decreases approximately as $1/M$ over
these sizes, with no evidence for an additional logarithmic term.

\paragraph{Scope.}
This benchmark concerns the static/adiabatic final-Hamiltonian ground
state. The dynamical problem in which the entangling cut remains at a
defect while that defect is switched on is distinct and is not addressed
by this calculation.

\section{Lattice realization and numerical method}
\label{sec:lattice}

The continuum analysis above identifies the objects that must be
controlled but does not determine the large-$L$ spatial entropy. We
therefore introduce a finite-dimensional lattice realization in which
the full covariance can be evolved directly. The lattice calculation
is a diagnostic of width, switching-time, profile, and Rényi
dependence; it is not by itself a continuum extrapolation.

\subsection{Doubled two-chain Majorana regulator}

We use an open chain of $N$ sites for each of two coupled Majorana
chains. This regulator is not a one-to-one discretization of a single
chiral continuum Majorana field: each lattice chain contains both
low-energy continuum chiralities (the usual doubling structure
around $k\simeq0$ and $k\simeq\pi$). Thus one chain has critical
central charge $c=1/2$, while the two-chain regulator has total
$c=1$. The continuum analytic model above remains the genuinely
counter-propagating $R/L$ theory; the lattice calculation below is a
doubled finite-dimensional regulator used to test the covariance
machinery, finite-width dependence, and weak-shutter response. The
two-chain construction is used because the shutter term
$\gamma_R\gamma_L$ couples the two chains directly; a single-chain
realization would require a different, less transparent lattice
representation of the $R/L$ mixing.

We use an open chain of $N$ sites for each of the two chains with
the lattice Hamiltonian
\begin{equation}
H_0=\frac{\ii}{2}\sum_{j=1}^{N-1}
\left(\gamma_{R,j}\gamma_{R,j+1}
-\gamma_{L,j}\gamma_{L,j+1}\right),
\label{eq:latticeH0}
\end{equation}
and a shutter term
\begin{equation}
H_{\rm sh}(t)=\ii\lambda(t)
\sum_j w_j(a)\gamma_{R,j}\gamma_{L,j}.
\label{eq:lattice-shutter}
\end{equation}
The discrete weights are normalized, $\sum_jw_j=1$, and approximate
three smooth continuum profiles supported inside the chosen width.
For later reference, $k$ denotes the bulk crystal quasi-momentum of the
translation-invariant continuation of this nearest-neighbour Majorana
chain, with Brillouin zone $-\pi<k\le\pi$. The bulk dispersion is
proportional to $|\sin k|$, so its low-energy sectors occur near
$k=0$ and $k=\pi$ (equivalently $k=-\pi$). Because the numerical chain
has open boundaries, $k$ is not an exact quantum number there; the open
chain has standing-wave eigenmodes instead. Thus $k\simeq0$ and
$k\simeq\pi$ below identify the two low-energy sectors of the
corresponding bulk dispersion, not two exact momentum quantum numbers
of the finite open chain.

The pilot calculation uses the three normalized profiles
$w_1,w_2,w_3$ defined in Eq.~\eqref{eq:numerical-profiles}. They are
genuinely distinct continuum profiles, although at very small widths
a finite lattice may fail to resolve their differences.

The time dependence is Eq.~\eqref{eq:tanh-switch}. The
representative controlled calculations use $\lambda_f=1$ and
$\tau=100$; the $\tau=10$ calculation is used as a finite-window
smooth-switch comparison. The corresponding continuum
zero-frequency angle is
\begin{equation*}
\theta_f=\arctan(\sinh1)=0.865769\ldots.
\end{equation*}

\subsection{Covariance evolution}

Writing
\begin{equation*}
H(t)=\frac{\ii}{4}\gamma^T A(t)\gamma,
\qquad A(t)^T=-A(t),
\end{equation*}
the antisymmetric matrix $A(t)$ is defined by matching the lattice
Hamiltonian term by term. In particular, the shutter contributes the
$Rj$--$Lj$ antisymmetric entries associated with
$2\lambda(t)w_j(a)$. We integrate
\begin{equation*}
\dot O(t)=A(t)O(t),
\qquad
\Gamma(t)=O(t)\Gamma(-\infty)O(t)^T.
\end{equation*}
The initial covariance is the free-chain ground-state covariance.
Time evolution uses the second-order implicit midpoint (Cayley)
step
\begin{equation*}
O_{n+1}=\left(I-\frac{\delta t}{2}A_{n+1/2}\right)^{-1}
\left(I+\frac{\delta t}{2}A_{n+1/2}\right)O_n.
\end{equation*}
For antisymmetric $A$, the update is orthogonal up to linear-solver
roundoff; the numerical solves use a relative tolerance of $10^{-12}$.
The step is checked by the refinement in Table~\ref{tab:timestep}.
After evolution we explicitly antisymmetrize $\Gamma$ to remove
numerical roundoff. Entropy eigenvalues are clipped to $[0,1]$ only
at the final numerical stage, after checking that deviations from
this interval are at numerical precision.

\subsection{Energy-mode construction for soft/hard entropy}

Let the free lattice generator be brought to canonical form by an
orthogonal transformation. For each positive energy $\varepsilon_n$,
choose the corresponding real two-dimensional Majorana plane. The
full transformation is block diagonal between the right and left
channels. In that basis the free covariance is block diagonal, while
the evolved covariance contains the shutter-induced inter-shell
correlations.

The soft subsystem is the span of complete positive-energy shells
with $\varepsilon_n\le\Omega_c$. We retain the right and left partner
modes together. The reduced covariance is the principal restriction
of the evolved covariance to those modes. The entropy is then
evaluated with Eq.~\eqref{eq:Renyi}.

This construction is preferable to selecting arbitrary individual
eigenvectors because it preserves the degeneracy structure of the
free two-channel problem. It also makes the continuum variable
explicit: the integer shell count is only a finite-lattice
representation of the physical cutoff $\Omega_c$.

\section{Numerical results: what does the lattice actually test?}
\label{sec:numerics}

The numerical calculation is not intended to produce a long list of
finite-size numbers. Each calculation below is tied to a physical
question that the analytic theory leaves open. First, can a local
spatial entropy distinguish the microscopic width and profile even
though the zero-frequency scattering angle is universal? Second, are
the reported values converged in time and independent of the arbitrary
switching duration in the adiabatic regime? Third, does the weak-shutter
response organize naturally in the physical scattering angle rather than
the bare coupling? Finally, can a frequency-space partition isolate the
soft correlations responsible for infrared memory? The numerical data
answer these questions only at finite cutoff; whenever a continuum limit
would be required, we say so explicitly.

\subsection{Does local spatial entanglement remember the microscopic profile?}

The numerical section is organized as a finite-cutoff validation,
not as a numerical measurement of the continuum infrared logarithm.
A useful comparison is provided by recent work on holographic conformal
interfaces, where the entanglement response is likewise found to depend
on the relative geometry of the interface and the entangling regions.
In particular, crossing, non-crossing, and two-interval configurations
can have different interface contributions, and a special ``Silver Blaze''
configuration exhibits a cancellation in which the two-interval transition
coincides with its no-interface value \cite{Afxonidis2026}. This comparison
does not identify the holographic interface parameters with the present
quasiparticle-memory coefficient $C_{\rm qp}$; rather, it provides an
independent example of why a nontrivial infrared interface response does
not by itself imply a universal additional leading spatial-entanglement
divergence.
Every numerical subsection below should be read in the regulator
actually used: the doubled two-chain lattice has total $c=1$, whereas
the analytic continuum shutter consists of one counter-propagating
Majorana pair with $c=1/2$. Accordingly, the lattice data test
covariance machinery, finite-width/profile response, and finite-cutoff
weak-shutter behavior; they are not quantitative measurements of
continuum $c=1/2$ entropy coefficients. In particular, the lowest
open-chain gap at $N=128$ is $\varepsilon_1\simeq0.0244$, whereas
$1/\tau=0.01$ for $\tau=100$. Thus the controlled runs do not realize
a parametrically wide window $\omega_{\min}\ll\omega\ll1/\tau$. In
fact, the $\tau=100$ evolution is numerically indistinguishable from
the ground state of the final Hamiltonian at the level relevant here:
the representative $M=64$ spatial correction is $-0.001218$ for the
evolved state versus $-0.001220$ for the static final-ground-state
calculation, and the maximum covariance difference is about
$1.6\times10^{-4}$. We therefore interpret Table~\ref{tab:controlled}
and the weak-shutter scan as finite-cutoff ground-state responses of
the coupled lattice, not as a numerical test of continuum infrared
memory. A genuine lattice test of the continuum memory window would
require $N\gg v\tau$ (or an equivalent regulator with parametrically
smaller infrared level spacing).

All spatial intervals quoted below are centered on the shutter. This
convention is essential for reproducibility. The representative
controlled calculation uses $N=128$, $\lambda_f=1$, $\tau=100$,
widths $a=1,2,4$, evolution over $[-4\tau,4\tau]$, and a second-order
midpoint (Cayley) update. The interval sizes are $M=16,32,48,64$. A
limited $N=256$ calculation is used only as a finite-size check. We
report an explicit time-step refinement at $\tau=100$, a
switching-time scan, and a weak-shutter scan below. The $\tau=10$, $a=4,8$ data are used only as finite-window diagnostics;
the resolved $\tau=100$ calculation provides the controlled infrared test.

The apparent equality of the three profiles at $a=1$ has a simple
lattice explanation. With the present open-chain convention the site
coordinates are half-integers, so the support condition $|x|<a$
leaves only the two sites $x=\pm1/2$ when $a=1$. On those two sites,
\begin{equation*}
w_2(\pm1/2)=\left(1+\frac18\right)w_1(\pm1/2),
\qquad
w_3(\pm1/2)=\left(\frac34\right)^4,
\end{equation*}
and the multiplicative constants disappear upon normalization. All
three profiles therefore produce exactly the same lattice weight
vector. The equality in the original table is thus a resolution
artifact, not a numerical coincidence. It is also a useful warning:
a width smaller than a few lattice spacings does not resolve the
continuum profile family and should not be used to test profile
universality.

The controlled $M=64$ von Neumann values are
\begin{table}[H]
\centering
\small
\begin{tabular}{c|ccc}
\toprule
$ a $ & $w_1$ & $w_2$ & $w_3$\\
\midrule
1 & $+0.002302$ & $+0.002302$ & $+0.002302$\\
2 & $-0.001214$ & $-0.001722$ & $+0.001520$\\
4 & $-0.003052$ & $-0.003102$ & $-0.002624$\\
\bottomrule
\end{tabular}
\caption{Controlled finite-cutoff spatial corrections at $N=128$,
$\tau=100$, $M=64$, and $\lambda_f=1$. The identical $a=1$ entries
are explained by the unresolved two-site profile described in the
text.}
\label{tab:controlled}
\end{table}

The table is not a continuum extrapolation. It records the
finite-cutoff response of the coupled lattice; profile dependence
becomes visible once the profile occupies several lattice sites. The
physics question is therefore not whether these three numbers should
be identical. The analytic scattering theorem fixes only the
zero-frequency map. Entanglement samples the covariance over a band of
lattice momenta, so the finite-cutoff calculation is precisely where
higher spatial moments of $w$ can remain visible. Agreement of the
profiles is expected only after the low-energy/zero-width scaling limit
is taken at fixed scattering angle.

\subsection{Is the numerical response converged in time?}
\label{sec:timestep}

For the representative controlled run $N=128$, $a=2$, $w_1$, $M=64$,
$\lambda_f=1$, and $\tau=100$, we evolved over $[-4\tau,4\tau]$.
Refining the number of midpoint/Cayley steps gives
\begin{table}[H]
\centering
\small
\begin{tabular}{rcc}
\toprule
number of steps & $\delta t$ & $\Delta S_1(M=64)$\\
\midrule
80 & 10 & $-0.0012139$\\
800 & 1 & $-0.0012196$\\
4000 & 0.2 & $-0.0012186$\\
\bottomrule
\end{tabular}
\caption{Time-step refinement for the controlled $N=128$, $a=2$,
$w_1$, $M=64$, $\lambda_f=1$, $\tau=100$ calculation. The total
spread is about $0.4\%$ of the correction.}
\label{tab:timestep}
\end{table}
This check supports the numerical value at fixed $\tau$; it is not a
switching-time convergence test. The switching-time dependence is
assessed separately with the timestep held fixed.

\subsection{Does slowing the switch change the physical response?}
\label{sec:tau-convergence}

The zero-memory approximation is controlled by
$\epsilon_{\rm mem}=a/\tau$. To verify that this is small in
practice, we perform the representative $N=128$, $a=2$, $w_1$,
$M=64$ calculation at three switching times $\tau=50,100,200$,
holding all other parameters fixed. The results are
\begin{table}[H]
\centering
\small
\begin{tabular}{c|ccc}
\toprule
$\tau$ & 50 & 100 & 200\\
\midrule
$\Delta S_1(M=64)$ & $-0.001178$ & $-0.001218$ & $-0.001219$\\
\bottomrule
\end{tabular}
\caption{Switching-time scan at fixed $N=128$, $a=2$, $w_1$,
$M=64$, $\lambda_f=1$. The total spread is about $4\times10^{-5}$; the change from $\tau=100$
to $200$ is only $10^{-6}$.}
\label{tab:tau-convergence}
\end{table}
The refined values show convergence of the finite-cutoff calculation
with switching time: the change from $\tau=100$ to $200$ is only
$10^{-6}$. This scan therefore serves as an adiabatic-convergence check
for the final-Hamiltonian response. It is evidence that the reported
spatial benchmark is probing the intended post-switch ground-state
response, not evidence that the continuum finite-memory correction is
exactly zero. It does not constitute a rigorous
error bound on the continuum Bogoliubov kernel; such a bound would
require solving the finite-memory time-dependent problem directly.

\subsection{Does the weak response organize in the physical scattering angle?}
\label{sec:renyiscaling}

The parity theorem makes the weak-shutter coefficient $Q_\alpha$ the
natural finite-cutoff observable. Rather than fitting a four-point
entropy difference to $A_\alpha\log M+B_\alpha$, we determine the
response directly from small angle changes,
\begin{equation}
Q_\alpha(M,a)=\frac{\Delta S_A^{(\alpha)}}{\Delta\theta^2}+O(\Delta\theta^2).
\label{eq:Qnumerical}
\end{equation}
The continuum question is then whether, after a controlled
lattice-spacing extrapolation and with the interval represented by
its chord length
\begin{equation*}
\ell_M=\frac{2(N+1)}{\pi}\sin\frac{\pi M}{N+1},
\end{equation*}
the quantity $Q_\alpha(\ell_M,a)$ approaches a universal
large-$\ell_M$ form. The present data are not sufficient to determine
that asymptotic form. In particular, a fit of a small difference of
two large entropies over four interval sizes can produce a spurious
slope and should not be interpreted as a continuum logarithmic
coefficient.

The independent weak-shutter scan at $N=128$, $a=2$, $w_1$, $M=64$,
$\tau=100$, and 4000 midpoint/Cayley steps gives, for
$\lambda_f=0.1,0.2,0.4,0.8,1.0$, the numerical ratios
\begin{equation*}
\frac{\Delta S_1}{\Delta\theta^2}\simeq-1.584\times10^{-3},\,-1.584\times10^{-3},\,-1.585\times10^{-3},\,-1.602\times10^{-3},\,-1.625\times10^{-3}.
\end{equation*}
The near constancy through $\lambda_f=0.4$ provides a direct
numerical check of the quadratic weak-shutter regime, while the
roughly $3\%$ departure at $\lambda_f=1$ indicates the expected
higher-order correction. We use this scan as the numerical validation
of Eq.~\eqref{eq:weakquadratic}, rather than interpreting the four-point logarithmic fits as continuum
asymptotics.

For the same data, using the bare coupling instead of the physical
angle gives a visibly more parameter-dependent ratio. For example, at
$\lambda_f=0.1,0.4,0.8$,
\begin{table}[H]
\centering
\small
\begin{tabular}{c|cc}
\toprule
$\lambda_f$ & $\Delta S_1/\Delta\theta^2$ ($\times10^{-3}$) & $\Delta S_1/\lambda_f^2$ ($\times10^{-3}$)\\
\midrule
0.1 & $-1.583$ & $-1.578$\\
0.4 & $-1.584$ & $-1.504$\\
0.8 & $-1.602$ & $-1.320$\\
\bottomrule
\end{tabular}
\caption{Comparison of the physical matching variable
$\Delta\theta$ with the bare coupling $\lambda_f$. Over the full
weak-shutter scan $\lambda_f=0.1$--$1.0$, the angle-normalized ratio
varies by about $2.6\%$, whereas the coupling-normalized ratio varies
by about $19.5\%$.}
\label{tab:theta-vs-lambda}
\end{table}

As a separate geometric check, we use the chord length $\ell_M$
rather than $M$ whenever interval size is compared with continuum
CFT scaling. For the vacuum on the doubled lattice, the fit over
$M=4,\ldots,80$ gives a slope $0.3312$ with maximum residual
$0.0056$ against $\log\ell_M$, compared with slope $0.2638$ and
maximum residual $0.101$ against $\log M$. This confirms that the
chord variable is the appropriate finite-size coordinate for the
vacuum benchmark. This is consistent with the use of physical interval
length in the CFT form of Eq.~\eqref{eq:CFTrenyi}. The shutter-induced
correction itself remains too small on the available lattice to support
an asymptotic logarithmic fit.

At small angle, a direct $Q_\alpha(M)$ diagnostic is more
informative than a logarithmic fit. For $\lambda_f=0.1$, $a=2$, and
$w_1$, the static final-ground-state calculation gives
\begin{table}[H]
\centering
\small
\begin{tabular}{c|rr}
\toprule
$M$ & $Q_1$ & $Q_2$\\
\midrule
8 & $-0.0271$ & $-0.0005$\\
16 & $-0.0130$ & $+0.0059$\\
32 & $-0.0057$ & $+0.0058$\\
64 & $-0.0016$ & $+0.0032$\\
112 & $\simeq0$ & $+0.0005$\\
\bottomrule
\end{tabular}
\caption{Exploratory weak-shutter coefficients $Q_1$ and $Q_2$ on
the $N=128$ lattice at $a=2$, $w_1$, using the small-$\lambda_f$
final-ground-state response. The values decrease toward zero as the
interval approaches the full finite system; no continuum large-$L$
inference is made.}
\label{tab:Q-M}
\end{table}
For $M=8,16,32,64$, the product $M Q_1$ is approximately
$-0.22,-0.21,-0.18,-0.10$, respectively; the apparent decrease is
consistent with strong finite-size and boundary effects rather than
evidence for an asymptotic law.

A separate finite-size check at fixed lattice width $a=2$ is more
informative. Using static final-ground-state responses at
$\lambda_f=0.1$ and $w_1$, the product $M Q_1$ approaches a common
finite-cutoff plateau as $N$ increases:
\begin{table}[H]
\centering
\small
\begin{tabular}{c|rrrr}
\toprule
$N$ & $M=8$ & $M=16$ & $M=32$ & $M=64$\\
\midrule
128 & $-0.216$ & $-0.208$ & $-0.183$ & $-0.102$\\
256 & $-0.215$ & $-0.211$ & $-0.205$ & $-0.181$\\
512 & $-0.214$ & $-0.210$ & $-0.209$ & $-0.204$\\
\bottomrule
\end{tabular}
\caption{Finite-size collapse of the weak-shutter von Neumann
coefficient at $\lambda_f=0.1$, $a=2$, and $w_1$, using static
final-ground-state responses. For $M\ll N$, $M Q_1$ approaches
approximately $-0.21$ as $N$ increases. This is a finite-cutoff
ground-state diagnostic: the approach toward $-0.21$ is consistent
with a $1/M$ decay at fixed lattice ultraviolet cutoff for
$M\ll N$, but it is not a continuum infrared exponent and not a
test of dynamical memory.}
\label{tab:Ncollapse}
\end{table}
The collapse is consistent with $Q_1\propto M^{-1}$ over the
displayed finite-size window, but we deliberately do not promote
that observation to a continuum exponent: the limit
$a_{\rm UV}\to0$ and the corresponding physical rescaling of $M$
have not been taken.

\subsection{Why does finite width remain visible?}
\label{sec:profilediagnostic}

The three normalized profiles differ not only in their pointwise
shape but also in their moments. Their second and fourth moments are
\begin{table}[H]
\centering
\small
\begin{tabular}{c|cc}
\toprule
 & $\int du\,u^2w(u)$ & $\int du\,u^4w(u)$\\
\midrule
$w_1$ & 0.1581 & 0.0530\\
$w_2$ & 0.1711 & 0.0598\\
$w_3$ & 0.0909 & 0.0210\\
\bottomrule
\end{tabular}
\caption{Second and fourth moments of the three normalized shutter
profiles.}
\label{tab:profile-moments}
\end{table}
The profile form factor gives a compact quantitative measure of the
finite-width structure sampled at a dimensionless momentum $ka$,
\begin{equation*}
\widehat w(k)=\int_{-1}^{1}du\,w(u)e^{iku}.
\end{equation*}
For the even profiles used here, $\widehat w$ is real. At the representative
dimensionless momenta $ka=\pi/2$ and $\pi$, we obtain
\begin{table}[H]
\centering
\small
\begin{tabular}{c|cc}
\toprule
 & $\widehat w(\pi/2)$ & $\widehat w(\pi)$\\
\midrule
$w_1$ & 0.8179 & 0.4065\\
$w_2$ & 0.8036 & 0.3657\\
$w_3$ & 0.8930 & 0.6278\\
\bottomrule
\end{tabular}
\caption{Dimensionless profile form factors for the three normalized
continuum profiles. The entries are $\widehat w(q)=\int_{-1}^1du\,w(u)e^{iqu}$
at $q=\pi/2$ and $\pi$. They illustrate quantitatively why profiles
with the same integrated coupling can retain different finite-$a$ lattice
responses away from the zero-frequency limit.}
\label{tab:profile-form-factor}
\end{table}
These values are not used as a one-parameter fit: the covariance samples a
range of lattice momenta, so the full form factor, rather than one selected
value, controls finite-resolution profile dependence.

The finite-lattice entropy differences do not follow the second
moment monotonically: $w_3$ has the smallest second moment but gives
the largest correction in several of the controlled comparisons.
Thus a single effective-width parameter is insufficient to explain
the observed profile dependence. This is physically reasonable. The
static scattering theorem constrains the integrated coupling at
$\omega=0$, whereas the entropy samples a range of lattice momenta;
higher moments of the profile, and therefore its Fourier transform
away from zero momentum, can enter the finite-$a$ covariance. We
therefore interpret the observed profile dependence as a
finite-resolution effect whose continuum fate remains to be tested
at fixed physical angle and fixed low-energy scale.

\subsection{Can a frequency partition see the infrared correlations?}
\label{sec:soft-hard-numerical}

The final entanglement diagnostic asks the most direct infrared question:
if the shutter produces soft correlations, can a frequency-space
bipartition isolate them? The soft/hard calculation was performed in
the free energy-mode basis using complete positive-energy shells. As in the spatial
analysis, these are modes of the doubled lattice regulator, not a
one-to-one realization of the single counter-propagating continuum
Majorana pair. Because the shell index $k$ does not correspond to a
fixed physical $\Omega_c$ when $N$ changes, the sequences below are
retained only as exploratory finite-lattice diagnostics. A continuum
comparison requires holding $\Omega_c$ fixed while reducing the
lattice spacing.

At $N=128$, profile $w_1$, and width $a=4$, the von Neumann entropy
for $k=2,4,8,16,32$ total shells (equivalently $k/2$ per chain) is
approximately
\begin{equation*}
0.2579,\quad0.2488,\quad0.2208,\quad0.1895,\quad0.1305.
\end{equation*}
There is no general monotonicity theorem for this bipartition. When
$k$ is increased, the subsystem projector changes and so does the
physical cutoff $\Omega_c$; at the same time the complementary
sector loses modes. The entropy is therefore not an entropy of a
nested spatial region, for which simple monotonicity intuition might
apply. The observed decrease with $k$ is consequently a property of
this finite-lattice sequence, not by itself a universal infrared
law.

A fixed-physical-$\Omega_c$ comparison between $N=128$ and $N=256$ does not
show numerical convergence at the present resolution. Using the same protocol,
the von Neumann entropies at $\Omega_c=0.0855,0.1215,0.1703$ change from
$0.240,0.216,0.2208$ at $N=128$ to $0.277,0.265,0.2352$ at $N=256$.
The decrease with shell count is therefore retained as a finite-lattice diagnostic,
not identified as a lattice artifact or as a converged continuum effect. A genuine
continuum comparison requires fixed physical $\Omega_c$ together with controlled
spatial and infrared scaling.

The Rényi response at $k=8$, $a=4$, for the three profiles is shown
in Table~\ref{tab:soft}. The profile dependence is modest but
nonzero at this finite lattice resolution.

\begin{table}[H]
\centering
\small
\begin{tabular}{ccrrrrr}
\toprule
profile & $a$ & $S_{1/2}$ & $S_1$ & $S_2$ & $S_3$ & $S_\infty$\\
\midrule
$w_1$&4&0.7454&0.2208&0.0796&0.0600&0.0400\\
$w_2$&4&0.7426&0.2189&0.0787&0.0594&0.0396\\
$w_3$&4&0.7661&0.2341&0.0859&0.0648&0.0432\\
$w_1$&8&0.6593&0.1721&0.0583&0.0439&0.0293\\
$w_2$&8&0.6534&0.1688&0.0569&0.0429&0.0286\\
$w_3$&8&0.6986&0.1941&0.0678&0.0511&0.0340\\
\bottomrule
\end{tabular}
\caption{Soft/hard frequency-mode entropies at $N=128$ for total
shell count $k=8$, i.e.\ four complete positive-energy shells per
chain, corresponding to $\Omega_c=\varepsilon_4\simeq0.1703$. The
values are finite-lattice diagnostics; the vacuum values are zero
within numerical precision.}
\label{tab:soft}
\end{table}

At $N=256$, $w_1$, $a=8$, the corresponding von Neumann entropies
for $k=4,8,16,32$ total shells are approximately
$0.3494,0.2496,0.1857,0.1224$. The shell energies are given
analytically by Eq.~\eqref{eq:shell-energy}; because the discrete
spectra at $N=128$ and $N=256$ do not contain exactly the same
cutoff values, a fixed-$\Omega_c$ comparison requires interpolation
or nearest-shell matching. We therefore retain these values as
finite-size diagnostics rather than as a continuum scaling test.

\subsection{What the numerical results establish}

The pilot calculation establishes four limited but useful facts.
\begin{enumerate}[label=(\roman*)]
\item The full Majorana covariance gives a stable numerical route
to all three partitions.
\item The spatial, two-chain, and soft/hard entropies are genuinely
different observables even though they are computed from the same
evolved Gaussian state.
\item The finite-lattice Rényi entropies are partition dependent;
the data are not used to extract a continuum $c_{\rm eff}$ or to
test the ordinary-CFT factor $1+1/\alpha$.
\item Finite-width and profile effects remain visible at the lattice
scales used here; consequently, profile independence must be
demonstrated by extrapolation rather than assumed from the
zero-frequency scattering theorem.
\end{enumerate}

The calculation does \emph{not} establish a continuum large-$L$
law, an exact entropy coefficient, or an infrared divergence of the
spatial entropy. Nor does the $N=128,\tau=100$ evolution
numerically resolve the continuum window
$\omega_{\min}\ll\tau^{-1}$ required to test the global infrared
logarithm. It does not replace the exact finite-memory time-dependent
scattering problem. Its controlled numerical content is instead the
finite-cutoff covariance construction, the general even-parity
constraint, the weak-shutter quadratic response, the switching-time
scan, and finite-width/profile diagnostics.

\section{Finite-width universality at the entanglement level}
\label{sec:universality}

The preceding sections establish two different facts: finite-width
scattering is universal at low frequency when profiles are compared
at fixed $\theta_\lambda$, while the finite-lattice entropy still
shows profile dependence. We now formulate the additional scaling
limit needed to promote one-particle universality into an
entanglement statement. If two profiles have different bare
couplings but the same asymptotic angle, they should be compared at
fixed $\theta_\lambda$. The continuum universality question is then
whether
\begin{equation}
\Delta S_A^{(\alpha)}(a,L;w)
\longrightarrow
\Delta S_A^{(\alpha)}(0,L)
\label{eq:ent-univ}
\end{equation}
when $a/L\to0$ at fixed angle, and whether the limit is independent
of $w$.

A sufficient route would be a bound such as
\begin{equation}
\norm{\Gamma_A^{(a)}-\Gamma_A^{(0)}}_1
\le C_A a^\eta,
\qquad \eta>0,
\label{eq:trace-sufficient}
\end{equation}
for fixed $L$ and regulator. This would imply convergence of the
finite-dimensional reduced density matrices and therefore of the
entropy. The present scattering estimate, by itself, does not
provide Eq.~\eqref{eq:trace-sufficient}; it controls the one-particle
on-shell map.

The large-interval problem is therefore a two-stage limit:
\begin{equation*}
\lim_{L/a_{\rm UV}\to\infty}
\lim_{a/L\to0}
\Delta S_A^{(\alpha)}(a,L),
\end{equation*}
with the order of infrared and ultraviolet limits specified
explicitly. If a direct frequency regulator $\mu$ is used, the
order of limits must include $\mu\to0$ as well.

\subsection{What is and is not determined}
\label{sec:what-established}

It is useful to separate the analytic infrared statements, the new finite-lattice
benchmarks, and the limits that remain open.

\paragraph{Established analytically.}
\begin{itemize}
\item The exact zero-frequency scattering matrix $S_a(0)=R(\theta_\lambda)$ is
profile-independent (Proposition 3.1).
\item The finite-frequency correction is bounded by $Ca|\omega|$ in the Frobenius
norm (Eq.~\eqref{eq:Sbound}).
\item The point-contact Bogoliubov kernel has the universal inverse-sum-frequency
singularity, and its squared Hilbert--Schmidt norm grows logarithmically
(Eq.~\eqref{eq:HSlog}).
\item All fixed-cutoff entropies in the spatial, chain, and soft/hard partitions
are even in $\Delta\theta$ (Eq.~\eqref{eq:generalparity}).
\end{itemize}

\paragraph{Established numerically at finite cutoff.}
\begin{itemize}
\item The doubled-lattice ground-state overlap exhibits Anderson
orthogonality (AO): the overlap follows the predicted
$\gamma_{\rm AO}=(\Delta\theta/2\pi)^2$ scaling within the quoted
finite-width and profile corrections.
\item The excitation number $N_{\rm exc}=\|\beta\|_{\rm HS}^2$ shows
logarithmic finite-size growth whose coefficient follows the parameter-free
$2C_{\rm qp}=4\sin^2(\Delta\theta/2)/\pi^2$ prediction in the tested
sudden and smooth-switch windows, including the resolved $\tau=100$
calculation (Sec.~\ref{sec:IRwindow}).
\item The half-chain defect geometry gives a parameter-free test of the
Eisler--Peschel effective central charge using
$s=\sech\lambda_f=\cos\theta_{\lambda_f}$, with agreement at the
percent level across the tested transmission range and improved agreement
for $a=4$ (Sec.~\ref{sec:defect-interpretation}).
\item The switching-time scan supports adiabatic convergence of the
controlled finite-cutoff final-ground-state response; it is not a rigorous
bound on the finite-memory continuum correction.
\end{itemize}

\paragraph{Open.}
\begin{itemize}
\item The continuum large-$L$ asymptotic of $\Delta S_A^{(\alpha)}(L)$
for an interval whose cuts do not coincide with the shutter is not determined.
\item The finite-memory correction to the time-dependent Bogoliubov kernel
is not bounded rigorously.
\item The soft/hard frequency-partition data remain finite-lattice diagnostics;
fixed-physical-cutoff convergence has not been established.
\end{itemize}

\subsection{Future numerical directions}
\label{sec:future-numerical}
The main remaining numerical target is a controlled continuum spatial-entropy
scaling study at fixed physical scattering angle and fixed physical
ultraviolet prescription. A second analytic target is a rigorous estimate of
the finite-memory correction to the time-dependent scattering kernel. Neither
is required for the finite-cutoff infrared benchmarks established here.

\section{Universal infrared scaling: orthogonality, Loschmidt dynamics, and a logarithmic-CFT consistency check}
\label{sec:universal-scaling}

The scattering angle becomes substantially more informative when
it is treated as a boundary-condition-changing parameter rather
than only as an input to the Bogoliubov kernel. A sudden change of
a local scattering condition is the canonical setting for Anderson
orthogonality (AO) and boundary-condition-changing (BCC) operators: the
many-body overlap becomes a finite-size probe of the scaling
dimension associated with the change of boundary data
\cite{Affleck1996,Anderson1967,NozieresDeDominicis1969}. The same
logic underlies conformal descriptions of local quenches and their
Loschmidt echoes \cite{StephanDubail2011}.

For the present two-Majorana problem the real rotation can be
combined into a complex fermionic channel,
\begin{equation*}
\psi=\frac{\chi_R+\ii\chi_L}{\sqrt2},
\end{equation*}
under which the zero-frequency rotation acts as a phase shift. The
change between the early and late scattering conditions is therefore
characterized by the single angle
\begin{equation*}
\Delta\theta=\theta_+-\theta_-.
\end{equation*}
For the present normalization it is useful to make the doubling
factor explicit. In a complex-fermion representation, the rotation
eigenchannels acquire opposite phase shifts $\pm\delta$ with
\begin{equation*}
\delta=\frac{\Delta\theta}{2}.
\end{equation*}
For one continuum Majorana pair, the Majorana constraint identifies
the conjugate channels, giving one half of the corresponding
complex-fermion orthogonality exponent,
\begin{equation*}
\gamma_{1\,\mathrm{pair}}=\frac{\delta^2}{2\pi^2}
=\frac{(\Delta\theta)^2}{8\pi^2}.
\end{equation*}
The open-chain regulator used in this paper contains two low-energy
sectors, near $k=0$ and $k=\pi$. In the continuum-sector identification
each sector is associated with the same counter-propagating Majorana pair,
so the doubled-lattice exponent is twice the single-pair value. For the
open chain this sector identification is not a first-principles derivation
of the boundary exponent because the finite system has standing-wave modes;
the doubling factor is independently validated by the numerical agreement
with the parameter-free prediction below,
\begin{equation}
\boxed{\gamma_{\rm AO}^{\rm(doubled)}
=2\gamma_{1\,\mathrm{pair}}
=\left(\frac{\Delta\theta}{2\pi}\right)^2.}
\label{eq:gammaAO}
\end{equation}
This is the exponent tested numerically below. It should not be
identified with the exponent of a single undoubled continuum
Majorana pair, which is smaller by a factor of two in this
convention. The derivation also makes clear why the result is
quadratic in the phase-shift mismatch: it is the sum of the
universal BCC/orthogonality contributions of the low-energy
sectors. The overlap and $C_{\rm qp}$ remain independent
observables; their common dependence on $\Delta\theta$ is not an
identity. More specifically, $C_{\rm qp}$ is the coefficient of a
logarithmic infrared divergence generated by a frequency integral
of the anomalous kernel, whereas $\gamma_{\rm AO}$ is the exponent
of a finite-size power law associated with a boundary-condition-changing
scaling dimension. Their common quadratic small-angle behavior
follows from the same scattering phase shift, but they are not the
same observable.

\section{Resolving the infrared window: the Bogoliubov logarithm on the lattice}
\label{sec:IRwindow}

The infrared coefficient $C_{\rm qp}$ multiplies
$\log(\omega_*/\mu)$, so a lattice measurement requires a resolved
window $\varepsilon_1\ll\omega\ll\omega_*$. For the open chain,
$\varepsilon_1\simeq\pi/(N+1)$. A sudden switch has an upper scale
set by the finite shutter width, $\omega_*\sim1/a$, whereas a smooth
switch of duration $\tau$ requires $\varepsilon_1\tau\ll1$ before
the adiabatic soft-mode region is separated from the logarithmic window.

\paragraph{Observable.}
For a pure Gaussian state $|\Psi_-\rangle$ and the ground state
$|\Psi_+\rangle$ of the post-shutter Hamiltonian, the number of
$H_+$ quasiparticles in $|\Psi_-\rangle$ is
$N_{\rm exc}=\|\beta\|_{\rm HS}^2$. With
$\Gamma_\pm^2=-I$ on $2N$ Majoranas,
\begin{equation}
N_{\rm exc}=\tfrac12 N+\tfrac14\Tr(\Gamma_+\Gamma_-)
=\frac18\|\Gamma_+-\Gamma_-\|_{\Frob}^2 .
\label{eq:Nexc}
\end{equation}
The infrared logarithm therefore predicts
\begin{equation}
\frac{dN_{\rm exc}}{d\ln N}=n_c C_{\rm qp}
=n_c\frac{2}{\pi^2}\sin^2\frac{\Delta\theta}{2},
\label{eq:Nexc-slope}
\end{equation}
where $n_c=2$ counts the two low-energy lattice sectors just defined,
one descending from the bulk region $k\simeq0$ and the other from
$k\simeq\pi$. The coefficient $C_{\rm qp}$ itself refers to one
continuum counter-propagating Majorana pair. At finite angle the dependence is not
exactly quadratic in $\Delta\theta$.

\paragraph{Sudden switch.}
We compute $\Gamma_\pm$ from the exact ground states of $H_0$ and
$H_0+\lambda_fH_{\rm sh}$ with $a=2$ and $w_1$. For
$N=64,\ldots,2048$ and, for $\lambda_f=0.4,1,3$, also $N=3072$, the
logarithmic slopes are
\begin{table}[H]
\centering
\small
\begin{tabular}{cccccc}
\toprule
$\lambda_f$&$\Delta\theta$&$N_{\rm exc}(3072)$&slope ($2048\to3072$)&
$\frac{4}{\pi^2}\sin^2\frac{\Delta\theta}{2}$&$\Delta\theta^2/\pi^2$\\
\midrule
0.4&0.3897&0.13944&0.01509&0.01520&0.01539\\
1.0&0.8658&0.66684&0.06889&0.07132&0.07595\\
3.0&1.4713&1.93027&0.17191&0.18251&0.21933\\
\bottomrule
\end{tabular}
\caption{Logarithmic slope of the excitation number after a sudden
switch for $a=2$, $w_1$. The measured slopes follow the finite-angle
$2C_{\rm qp}=4\sin^2(\Delta\theta/2)/\pi^2$ dependence more closely
than its quadratic small-angle approximation. The largest-angle case
still shows finite-width convergence with increasing $a$.}
\label{tab:Nexc}
\end{table}
At weak angle, $\lambda_f=0.2$, the fitted slope over
$N=512$--$2048$ is $0.00398$, compared with the prediction $0.00399$.
For $\lambda_f=3$ the successive local slopes
$0.1616,0.1661,0.1689,0.1705,0.1715$ show residual finite-size
convergence. Increasing the width improves the agreement: for
$\lambda_f=3$ the measured/predicted ratios over $N=512\to1024$ are
$0.920$, $0.934$, and $0.994$ for $a=1,2,4$, respectively; for
$\lambda_f=1$ they are $0.969$, $0.965$, and $0.999$.

\paragraph{Smooth switch.}
For the smooth protocol at $\tau=10$, with $\lambda_f=1$ and $a=2$,
$N/(\pi\tau)$ ranges from $4$ to $33$ for $N=128$--$1024$. The
excitation numbers are $0.1339,0.1839,0.2324,0.2804$, with
successive-doubling slopes $0.0722,0.0699,0.0693$, consistent with
$0.0713$.

A more stringent resolved-window calculation is obtained at
$\tau=100$, $\lambda_f=1$, $a=2$, $w_1$. We evolve the covariance over
$[-4\tau,4\tau]$ using a Cayley step $\delta t=10$ and a banded
solver for the pentadiagonal generator in the interleaved $R_j,L_j$
ordering. At $N=256$, refinement $\delta t=10,5,2.5,1$ gives
$N_{\rm exc}=0.005283,0.005268,0.005267,0.005266$, demonstrating
$0.3\%$ step-size convergence. The implementation reproduces the
$\tau=10$, $N=256$ value $0.183916$.

The window parameter is
$\varepsilon_1\tau=\pi\tau/(N+1)$. The resulting data are
\begin{table}[H]
\centering
\small
\begin{tabular}{rccc}
\toprule
$N$&$\varepsilon_1\tau$&$N_{\rm exc}$&local slope\\
\midrule
384&0.816&0.02667&\\
512&0.612&0.05249&0.0898\\
768&0.409&0.09155&0.0963\\
1024&0.307&0.11607&0.0853\\
1536&0.204&0.14698&0.0762\\
2048&0.153&0.16780&0.0724\\
3072&0.102&0.19652&0.0708\\
\bottomrule
\end{tabular}
\caption{Excitation number after a smooth $\tau=100$ switch. The local
slope approaches the predicted doubled-lattice coefficient
$2C_{\rm qp}=0.0713$ as $\varepsilon_1\tau$ decreases. The final local
slope is $0.0708$ at $N=2048\to3072$, while a fit over
$N=1536$--$3072$ gives $0.0714$.}
\label{tab:smooth100}
\end{table}
The approach is consistent with the crossover from the adiabatic regime,
where modes below the switching scale are strongly suppressed, to the
infrared logarithmic regime. Within the tested range, the same coefficient
is obtained for the sudden and smooth protocols.

\paragraph{Relation to the orthogonality exponent.}
For independent Bogoliubov pairs with small squeezing parameters $r_j$,
$N_{\rm exc}=2\sum_j\sinh^2r_j$ whereas
$-\ln|\langle\Psi_-|\Psi_+\rangle|
=\tfrac12\sum_jr_j^2+O(r^4)$. Hence
\begin{equation*}
N_{\rm exc}=4[-\ln|\langle\Psi_-|\Psi_+\rangle|]+O(r^4)
\end{equation*}
with the factor four arising from pair occupation and the overlap
normalization. At small angle the logarithmic slopes therefore satisfy
$2C_{\rm qp}=4\gamma_{\rm AO}=(\Delta\theta/\pi)^2$ for the
doubled lattice. At finite angle, the excitation coefficient is instead
proportional to $\sin^2(\Delta\theta/2)$ whereas the orthogonality
exponent is proportional to $(\Delta\theta/2)^2$.

\paragraph{Scope.}
The sudden-switch calculation reaches $N=3072$. The smooth-switch tests
reach $N=1024$ for $\tau=10$ and $N=3072$ for $\tau=100$. The latter has
$\varepsilon_1\tau\simeq0.10$ at its largest size. Because the evolved
state is numerically indistinguishable from the final ground state at the
resolution relevant here, this is a ground-state finite-size scaling
measurement of the coupled lattice, not a dynamical measurement of the
continuum Bogoliubov kernel and not an asymptotic continuum limit. The exact finite-memory correction to the continuum scattering
kernel remains uncomputed.

\subsection{Ground-state overlap as a BCC observable}
\label{sec:overlap}

Here and below $N$ denotes the number of lattice sites on each
Majorana chain; the doubled covariance therefore has dimension
$2N$. Let $|\Psi_-\rangle$ and $|\Psi_+\rangle$ be the finite-size
Gaussian ground states of the pre- and post-shutter lattice
Hamiltonians. Their overlap is computed directly from the two pure
Majorana covariance matrices. For the covariance convention of
Sec.~\ref{sec:gaussian},
\begin{equation}
\log |\langle\Psi_-|\Psi_+\rangle|
=\frac14\log\det\left[\frac{I-\Gamma_-\Gamma_+}{2}\right].
\label{eq:gaussian-overlap}
\end{equation}
The thermodynamic prediction is
\begin{equation}
|\langle\Psi_-|\Psi_+\rangle|
\sim N^{-\gamma_{\rm AO}}
\left[1+O(N^{-\omega})\right],
\label{eq:AO-scaling}
\end{equation}
with nonuniversal corrections controlled by the lattice cutoff and
the finite width.

The numerical test is deliberately independent of the entropy
calculation. At $a=2$, profile $w_1$, and $\lambda_f=0.4$, the
physical angle is
\begin{equation*}
\theta_f=\arctan(\sinh0.4)=0.38974112,\qquad
\gamma_{\rm AO}=0.00384762.
\end{equation*}
A fit of the actual lattice data at
$N=64,96,128,192,256,384,512$ gives
$x_{\rm AO}^{\rm fit}=0.00383467$ for the representative choice
$a=2,w_1$. Across the width/profile variations $a=1,2,4,8$ and
$w_1,w_2,w_3$ tested at $\lambda_f=0.4$, the extracted exponent
spans $0.003787$--$0.004138$, corresponding to a spread of roughly
$\pm7\%$ around the analytic value $\gamma_{\rm AO}=0.00384762$.
The appropriate conclusion is therefore agreement at the
few-percent level with finite-width/profile/fit-window corrections,
not $0.34\%$ universal accuracy. In the weak-shutter regime the
angular scaling is much sharper: at $\lambda_f=0.2$,
$x_{\rm AO}/\Delta\theta^2=0.0254$, within about $0.3\%$ of
$1/(4\pi^2)$.

\begin{figure}[H]
\centering
\includegraphics[width=0.78\linewidth]{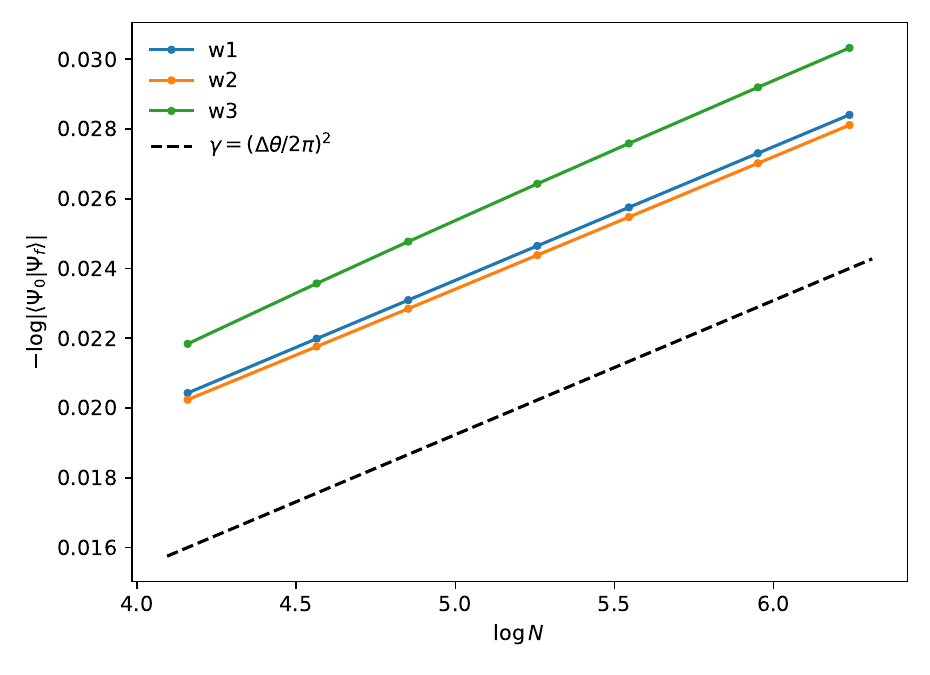}
\caption{Ground-state overlap as a finite-size BCC/orthogonality
diagnostic at $\lambda_f=0.4$, $a=2$. The lattice data for three
smooth profiles follow a common power law. The dashed line uses
the parameter-free infrared prediction
$\gamma_{\rm AO}=(\Delta\theta/2\pi)^2$. Profile dependence is
visible only at the finite-size level and is substantially smaller
than the leading power-law scaling signal.}
\label{fig:overlap-scaling}
\end{figure}

Repeating the fit for $\lambda_f=0.2,0.4,0.6,0.8,1.0$ gives
\begin{table}[H]
\centering
\small
\begin{tabular}{c|cc}
\toprule
$\lambda_f$ & $x_{\rm AO}^{\rm fit}$ & $(\theta_f/2\pi)^2$\\
\midrule
0.2&0.001003&0.0009999\\
0.4&0.003835&0.0038476\\
0.6&0.008037&0.0081416\\
0.8&0.013036&0.0133585\\
1.0&0.018301&0.0189865\\
\bottomrule
\end{tabular}
\caption{Orthogonality exponent scan for the doubled lattice. The
fitted exponent follows the scattering prediction
$\gamma_{\rm AO}=(\Delta\theta/2\pi)^2$ at the few-percent level,
with finite-cutoff corrections increasing at larger angle.}
\label{tab:AOscan}
\end{table}
The agreement is strongest in the weak-to-intermediate regime and
gradually develops the expected finite-cutoff correction as the
angle becomes larger. The relevant point is not the numerical size
of that correction but the collapse when the bare coupling is
replaced by the physical scattering angle.

\begin{figure}[H]
\centering
\includegraphics[width=0.72\linewidth]{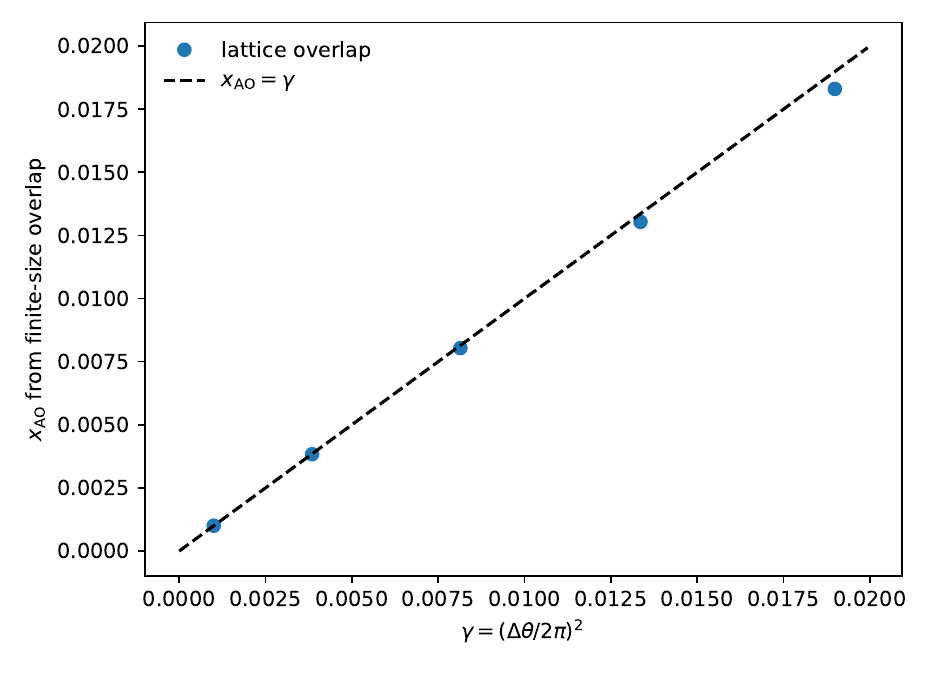}
\caption{Extracted orthogonality exponent versus the scattering
prediction $\gamma_{\rm AO}=(\Delta\theta/2\pi)^2$. The numerical
exponent is obtained from finite-size ground-state overlaps. The
diagonal line is not a fit; it is the analytic scattering
prediction.}
\label{fig:AO-exponent}
\end{figure}

This result gives the paper a direct many-body interpretation of
the infrared angle. The quantity $C_{\rm qp}$ measures the
logarithmic growth of the global anomalous Hilbert--Schmidt norm,
while $x_{\rm AO}$ measures the power-law decay of the many-body
vacuum overlap. They independently exhibit the same leading
small-angle dependence on $\Delta\theta$, but they are different
observables and need not agree at finite angle. In particular,
$C_{\rm qp}=(2/\pi^2)\sin^2(\Delta\theta/2)$ is not equal to
$\gamma_{\rm AO}$ away from the weak-mismatch limit, and the
present numerical runs do not resolve the $\mu\ll\tau^{-1}$ window
needed to measure $C_{\rm qp}$ directly. We therefore regard the
overlap as an independent BCC/orthogonality measurement, not as a
numerical measurement of $C_{\rm qp}$.

\subsection{Loschmidt amplitude and the dynamical BCC test}
\label{sec:loschmidt}

The second independent probe is the sudden-quench Loschmidt
amplitude
\begin{equation}
G(t)=\langle\Psi_-|e^{-\ii H_+t}|\Psi_-\rangle,
\qquad
\mathcal L(t)=|G(t)|^2.
\label{eq:Loschmidt}
\end{equation}
The same Gaussian overlap formula can be applied between
$|\Psi_-\rangle$ and the evolved Gaussian state. For a finite
critical system the power law can only hold before the first
finite-size recurrence, so the fitting window is chosen
conservatively.

For $N=512$, $a=2$, $w_1$, and $\lambda_f=0.4$, an ordinary
least-squares fit of $-\log|G(t)|$ against $\log t$ gives
window-dependent slopes
\begin{equation*}
\begin{array}{c|ccccc}
\toprule
window&$5$--$40$&$5$--$60$&$5$--$80$&$10$--$80$&$20$--$80$\\
\midrule
x_L&0.00773&0.00778&0.00787&0.00793&0.00804\\
\bottomrule
\end{array}
\end{equation*}
compared with $2\gamma_{\rm AO}=0.007695$. Thus the pre-revival
slope is consistent with the BCC prediction at the few-percent
level. The drift is monotonic as the fit window is moved later,
consistent with finite-size effects as the dynamics approaches the
first recurrence at $t\sim N/v$; it is not treated as stochastic
fit noise. We quote the range obtained from the specified fit windows rather
than a single representative value. The
factor of two relative to the ground-state overlap exponent is the
standard BCC/local-quench relation:
$|\langle\Psi_-|\Psi_+\rangle|\sim N^{-\gamma_{\rm AO}}$ whereas
the corresponding Loschmidt amplitude obeys
$|G(t)|\sim t^{-2\gamma_{\rm AO}}$. Time is measured in the lattice
units used throughout the numerical evolution, with low-energy
velocity $v=2$; this velocity changes only the nonuniversal time
scale, not the logarithmic slope. For $N=512$, the observed
recurrence near $t\simeq2(N/2)/v\simeq256$ is consistent with this
finite-size propagation scale.

\begin{figure}[H]
\centering
\includegraphics[width=0.78\linewidth]{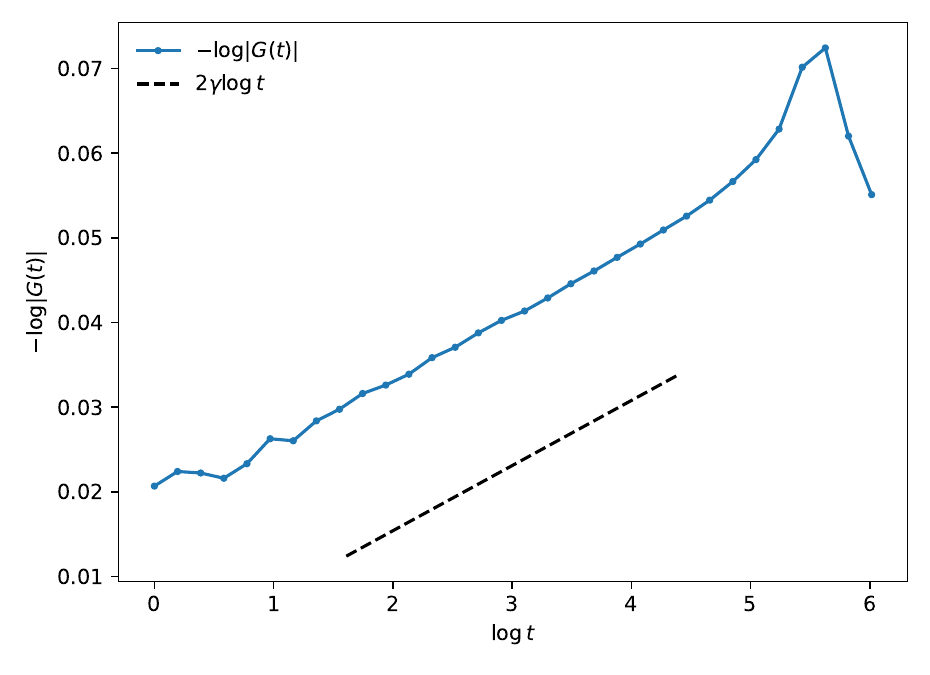}
\caption{Loschmidt amplitude after a sudden switch for $N=512$,
$a=2$, $w_1$, and $\lambda_f=0.4$. The pre-revival data give slopes
in the range $0.00773$--$0.00804$ for the fit windows listed in the
text, compared with $2\gamma_{\rm AO}=0.007695$. The late-time
deviation is a finite-size recurrence effect, not an LCFT
signature.}
\label{fig:loschmidt}
\end{figure}

The Loschmidt calculation is intentionally performed for a sudden
switch even though the controlled shutter dynamics elsewhere in the
paper uses a smooth $\tanh(t/\tau)$ protocol. This separation is
useful: the sudden quench isolates the universal BCC exponent,
while the smooth switch tests the finite-memory problem. A
finite-$\tau$ dynamical exponent should be compared with the
sudden-quench result only after the $a/\tau\to0$ limit is
controlled.

\paragraph{Entanglement-spectrum remark.}
The restricted covariance contains more information than any single
entropy. If $\xi_j$ are the positive eigenvalues of $i\Gamma_A$,
the single-particle entanglement energies are
$\varepsilon_j=2\artanh\xi_j$. In the present work this spectrum is
retained as a regulator-level diagnostic only: its low-lying
structure is consistent with the doubled two-chain construction,
but no quantitative continuum conformal coefficient is extracted
from it. We therefore do not use the entanglement spectrum as an
independent scaling claim.

\subsection{Continuum two-point function and a direct logarithmic-CFT consistency check}
\label{sec:LCFT-correlator}

We emphasize at the outset that a free-fermion lattice regulator
with a Hermitian Hamiltonian is \emph{not} expected to exhibit a
logarithmic conformal field theory (LCFT) Jordan structure. The scaling generator $L_0$ of the free
Majorana fixed point is diagonalizable on the lattice Hilbert space
(up to non-generic degeneracies that are not sampled here), so a
nonzero logarithmic partner cannot appear. The diagnostic below is
therefore a consistency check on the ordinary free-fermion exponent
$p=2$, not a genuine search for an LCFT.

A logarithmic conformal interpretation makes a strong local
statement: after the correct continuum operator has been identified,
a logarithmic partner can produce a two-point function of the form
\begin{equation}
C_{\log}(r)=\frac{A+B\log(r/a_{\rm UV})}{r^{2h}},
\label{eq:logcorr}
\end{equation}
with a nonzero coefficient $B$ that survives the scaling limit and
is associated with an independently identified partner operator. In
a rank-two LCFT the scaling generator acts non-semisimply,
\begin{equation}
L_0|\phi\rangle=h|\phi\rangle,
\qquad
L_0|\psi\rangle=h|\psi\rangle+|\phi\rangle,
\label{eq:Jordan}
\end{equation}
which is the structural origin of logarithmic correlators
\cite{Gurarie1993,Rasmussen2004,Alishahiha2014}. A logarithm in an infrared
integral is therefore not sufficient evidence for an LCFT.

For the free Majorana fixed point the simplest local scalar
bilinear is
\begin{equation}
\mathcal O(x)=i\chi_R(x)\chi_L(x).
\label{eq:mass-operator}
\end{equation}
Each chiral Majorana field has scaling weight $1/2$, so the bulk
scaling dimension of $\mathcal O$ is
\begin{equation}
\Delta_{\mathcal O}=h+\bar h=1.
\label{eq:mass-dimension}
\end{equation}
Consequently an ordinary fixed-point theory predicts
\begin{equation}
\langle\mathcal O(r)\mathcal O(0)\rangle_{\rm conn}
\propto r^{-2},
\label{eq:ordinary-bilinear}
\end{equation}
whereas an LCFT contribution would require a stable logarithmic
modification of this power law. The use of the connected correlator
is essential because the shutter can generate a nonzero local
expectation value $\langle\mathcal O\rangle$; without subtracting
it, a constant disconnected contribution can masquerade as an
anomalous infrared exponent.

The lattice regulator contains two low-energy sectors (the doubling
discussed in Sec.~\ref{sec:lattice}). The connected correlator must
therefore also be allowed to contain a staggered contribution. For
\begin{equation*}
O_j=i\gamma_{R,j}\gamma_{L,j},
\end{equation*}
we compute, for $r>0$,
\begin{align}
C_{\rm conn}(r)
&=\langle O_jO_{j+r}\rangle
 -\langle O_j\rangle\langle O_{j+r}\rangle\notag\\
&=A r^{-p}+(-1)^r B r^{-q}+\cdots .
\label{eq:lattice-correlator-fit}
\end{align}
The expression is evaluated directly from the covariance matrix
using Wick's theorem; no continuum exponent is imposed in the
numerical calculation. The alternating term is a regulator effect
associated with the second low-energy sector and is precisely why
an even/odd subsequence should not be interpreted as a continuum
correlator by itself.

To test the logarithmic hypothesis without assuming it, we compare
Eq.~\eqref{eq:lattice-correlator-fit} with the enlarged form
\begin{equation}
C_{\rm conn}(r)=
\frac{A+B_{\log}\log(r/a_{\rm UV})}{r^p}
+(-1)^r
\frac{C+D_{\log}\log(r/a_{\rm UV})}{r^q}
+\cdots .
\label{eq:lattice-log-fit}
\end{equation}
The ordinary model has four fit parameters, while the logarithmic
model has six. We compare the fitted residuals and, more
importantly, the stability of the logarithmic coefficient under
changes of $N$, shutter strength, anchor site, and fit window. We
do not use BIC: for deterministic lattice data without a specified
stochastic error model, a BIC difference would quantify fit
conditioning rather than statistical evidence.

The calculation was repeated for $N=128,256,512$ and
$\lambda_f=0,0.2,0.4,0.6$, with $a=2$, profile $w_1$, anchor site
$j=N/2$, and the common fit window
\begin{equation*}
5\le r\le r_{\max},\qquad r_{\max}=\min(102,N/4).
\end{equation*}
The shutter data are compared directly with the vacuum control. For
$\lambda_f=0.4$ the fitted non-staggered exponents are
\begin{equation*}
\begin{array}{c|ccc}
\toprule
 &N=128&N=256&N=512\\
\midrule
\lambda_f=0&1.7872&1.8826&1.9400\\
\lambda_f=0.4&1.7878&1.8934&1.9599\\
\bottomrule
\end{array}
\label{tab:LCFT-vacuum-control}
\end{equation*}
The shutter changes the fitted exponent by at most about one
percent over this sequence, while the finite-size drift toward
$p=2$ is already present in the vacuum. The exponent is
consequently not a sensitive shutter diagnostic. The correlator
itself does respond to the shutter: at $j=N/2$ the maximum relative
change of the connected correlator over the fitted range is about
$8$--$9\%$ for $\lambda_f=0.4$. We therefore regard the exponent
primarily as a consistency check on the continuum operator
dimension, not as evidence for a shutter-induced change of scaling
dimension.

The result also depends on the anchor site, as expected for an
open finite chain. At $N=512$ the vacuum fit gives $p=1.923$ for
$j=N/4$ and $p=2.060$ for an anchor chosen as
$j=N/2-r_{\max}/2$. This sensitivity reinforces the need to specify
the anchor and to compare shutter data against the corresponding
vacuum control.

For the logarithmic test we use the actual LCFT form in
Eq.~\eqref{eq:lattice-log-fit}, namely $\log(r/a_{\rm UV})$, not
$\log\log(r/a_{\rm UV})$. The latter is nearly degenerate with a
small shift of the fitted power over the accessible range and is
therefore not a discriminating LCFT ansatz. At $N=512$ the fitted
ratios $B_{\log}/A$ are approximately $-8\times10^{-8}$ in the
vacuum and $7\times10^{-6}$ for $\lambda_f=0.4$; across the tested
sizes and couplings they remain at or below the $10^{-4}$ level.
The fitted $p$ is essentially unchanged when the logarithmic term
is included. These values are not interpreted as statistically
significant detections because the lattice data are deterministic;
their physical significance is their lack of stability under
regulator and control variations.

The appropriate conclusion is a consistency check rather than a
search: the free-fermion regulator reproduces the ordinary
$r^{-2}$ continuum dimension and, as expected on general grounds,
shows no logarithmic admixture. A genuine LCFT search would require
either a non-unitary lattice model or an interacting deformation,
neither of which is considered here.

\begin{figure}[H]
\centering
\includegraphics[width=0.78\linewidth]{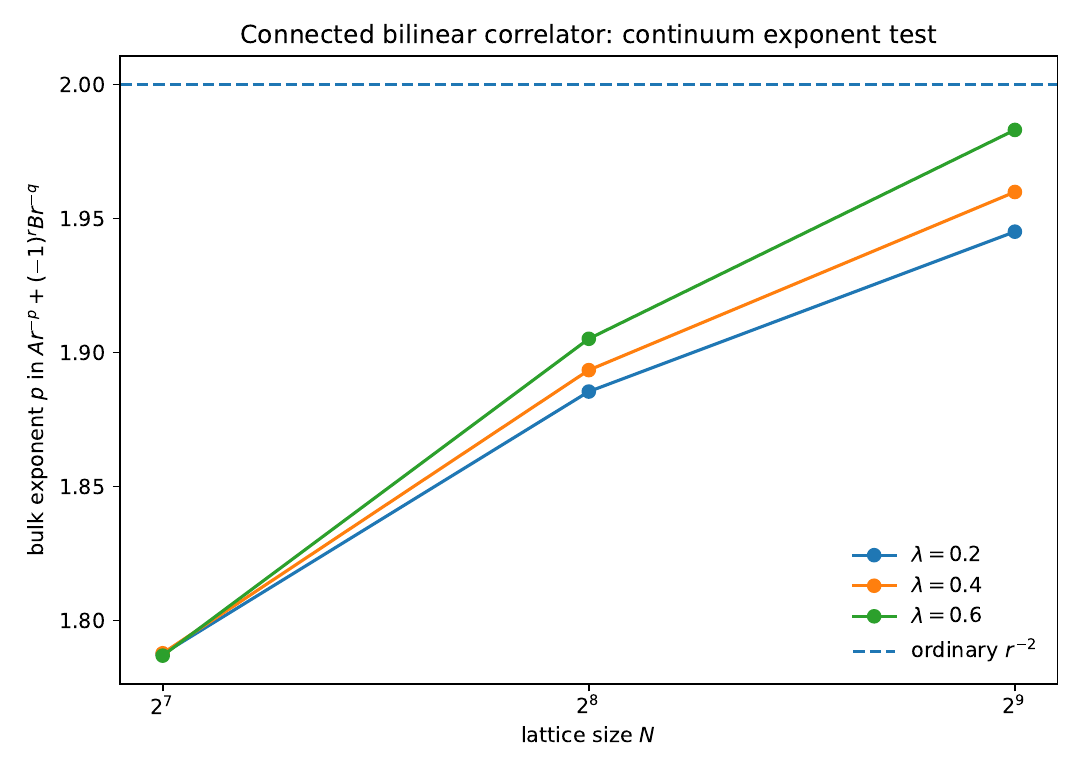}
\caption{Vacuum control for the connected bulk bilinear exponent.
The fits use the anchored two-component form
$A r^{-p}+(-1)^r B r^{-q}$ with $j=N/2$ and
$5\le r\le\min(102,N/4)$. The finite-size drift toward $p=2$ is
already present at $\lambda_f=0$; the shutter changes the fitted
exponent only weakly.}
\label{fig:LCFT-exponent}
\end{figure}

\begin{figure}[H]
\centering
\includegraphics[width=0.78\linewidth]{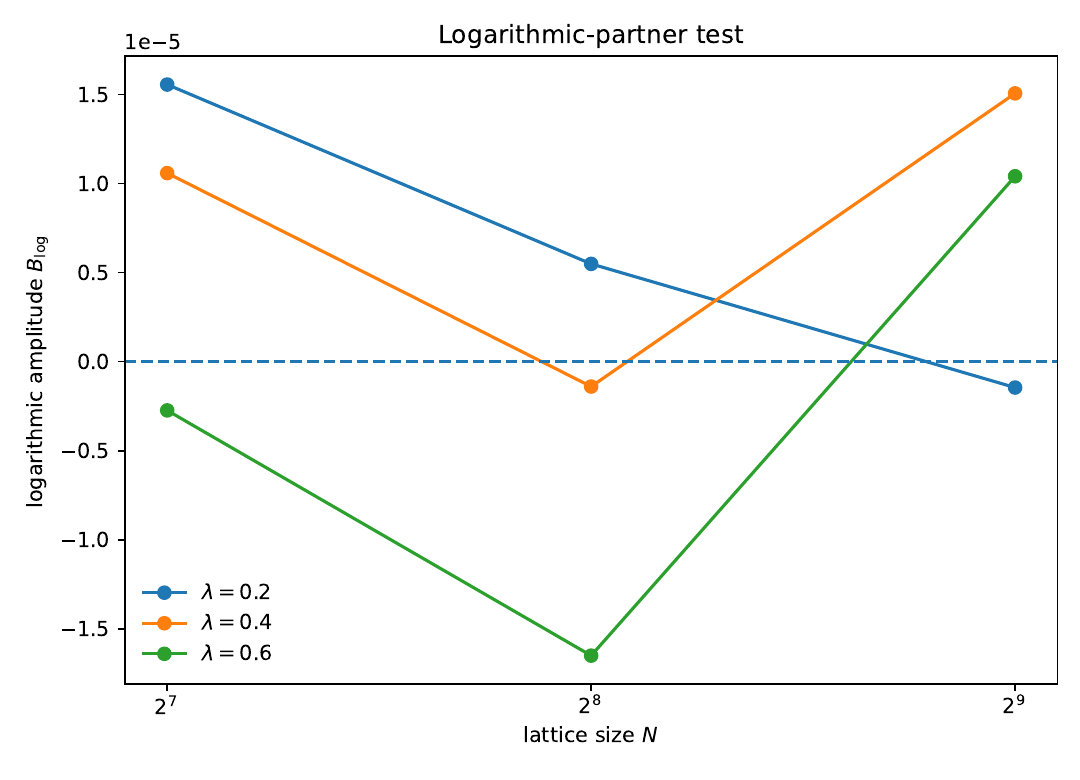}
\caption{Direct logarithmic-partner diagnostic using the correct
$\log(r/a_{\rm UV})$ ansatz. At $N=512$, $B_{\log}/A$ is
$-8\times10^{-8}$ in the vacuum and $7\times10^{-6}$ at
$\lambda_f=0.4$. The small, regulator-dependent amplitudes do not
establish a logarithmic two-point function.}
\label{fig:LCFT-log}
\end{figure}

\subsection{What the infrared logarithms mean}
\label{sec:LCFT}

The preceding calculation separates three mathematically different
statements:
\begin{equation*}
\boxed{\begin{gathered}
\text{IR logarithm in a frequency integral}
\\[-1mm]
\neq\;\text{logarithmic two-point function}
\\[-1mm]
\neq\;\text{LCFT Jordan structure}.
\end{gathered}}
\end{equation*}
The first is established here by the Hilbert--Schmidt memory
integral and by the orthogonality exponent. The second was tested
through Eq.~\eqref{eq:lattice-log-fit}; the present data show no
stable logarithmic amplitude. The third is a statement about the
scaling-limit operator algebra and is not inferred from either
numerical fit. Thus the paper interprets its infrared logarithms as
ordinary memory and orthogonality effects of a unitary free-fermion
problem, while the available two-point data do not provide evidence
for an LCFT logarithmic partner.

This negative result is useful for the spatial-entanglement
problem. It removes one possible mechanism by which a logarithm in
the global infrared observables could have been reinterpreted as an
LCFT-induced entropy enhancement. The remaining continuum spatial
entropy problem is therefore still the ordinary Gaussian
restriction problem described in Sec.~\ref{sec:analytic-spatial},
with its UV/IR order of limits and doubled-lattice caveats intact.

\subsection{Heuristic relation between the soft kernel and spatial entropy}
\label{sec:spatial-heuristic}

The global infrared logarithm and the spatial entropy are related through the
Majorana covariance, but they are not the same functional of the state. In the
weak-shutter regime the logical sequence is
\begin{equation*}
\beta\;\longrightarrow\;\delta\Gamma\;\longrightarrow\;
P_A\delta\Gamma P_A\;\longrightarrow\;\{\delta\xi_j\}\;\longrightarrow\;
\Delta S_A^{(\alpha)}.
\end{equation*}
The first arrow is controlled by the soft Bogoliubov kernel, whereas the last
two arrows involve spatial projection and nonlinear spectral processing. By
contrast, the global quasiparticle number is the unrestricted quadratic trace
\begin{equation*}
N_{\rm exc}=\operatorname{Tr}\,\beta^\dagger\beta
\sim C_{\rm qp}\log(\omega_*/\mu).
\end{equation*}
There is therefore no general sum rule that turns $C_{\rm qp}$ directly into a
spatial entropy coefficient.

The position-space form of the soft kernel makes the distinction more concrete.
At leading order in the mismatch, one has schematically
\begin{equation*}
\widetilde{\beta}(x,y)\sim
-\frac{\Delta\theta}{2\pi}\,\frac{\log(x/y)}{x-y},
\end{equation*}
up to convention-dependent phases and contact/UV terms. If $x=Lu$ and
$y=Lv$ inside an interval of size $L$, then the explicit scale factors cancel:
\begin{equation*}
\widetilde{\beta}(Lu,Lv)\sim
-\frac{\Delta\theta}{2\pi L}\,\frac{\log(u/v)}{u-v}.
\end{equation*}
Consequently the corresponding continuum Hilbert--Schmidt integral over a
fixed scaled region has no automatic factor of $\log L$ from dimensional
rescaling alone. Its remaining sensitivity is concentrated near boundaries,
the diagonal $u=v$, and the ultraviolet regulator. Spatial projection can thus
redistribute or soften the global soft-mode divergence without literally
removing the soft correlations.

A useful perturbative way to state the same point is to write, at fixed cutoff,
\begin{equation*}
\Delta S_A^{(\alpha)}=Q_\alpha(L,a_{\rm UV})\,\Delta\theta^2
+O(\Delta\theta^4),
\end{equation*}
and regard $Q_\alpha$ as a nonlinear functional of the restricted covariance
rather than of $\operatorname{Tr}\beta^\dagger\beta$. The numerical observation
that the centered-interval correction decreases approximately as $1/M$ over
the accessible range is therefore compatible with the absence of an additional
leading logarithm, but it does not establish the asymptotic law. In particular,
this argument does not prove that spatial restriction always removes the global
IR divergence; it only explains why the global logarithm does not by itself
force the same logarithm in spatial entropy.

\subsection{Relation to the spatial entropy problem}

The new scaling observables also clarify the role of the spatial
entropy. The overlap and Loschmidt amplitudes probe the global
change of the scattering vacuum directly. Spatial entropy first
applies a local restriction and only then performs a nonlinear
spectral functional. It is therefore perfectly consistent for
\begin{equation*}
|\langle\Psi_-|\Psi_+\rangle|\sim N^{-\gamma_{\rm AO}}
\end{equation*}
and
\begin{equation*}
|G(t)|\sim t^{-2\gamma_{\rm AO}}
\end{equation*}
while the spatial shutter correction remains parametrically smaller
on accessible intervals. The central distinction,
\begin{equation*}
\text{global infrared non-implementability}
\not\Rightarrow
\text{a determined spatial infrared entropy divergence},
\end{equation*}
is therefore strengthened rather than weakened by the new
calculations: the global infrared exponent is now measured
independently, rather than inferred from spatial entropy. The
implications of this separation for the spatial entropy program are
summarized in Sec.~\ref{sec:conclusions}.

\section{Discussion}
\label{sec:discussion}

The finite-width shutter gives a clean bridge between microscopic
scattering and Gaussian entanglement. The logical chain is
explicit: the static profile determines $S_a$, the switch
determines $\beta$ and $C_{\rm qp}$, the Gaussian state determines
$\Gamma$, and each bipartition applies its own projector before the
entropy is evaluated. At zero frequency, the profile enters only
through its integrated coupling and the scattering matrix is
exactly an $SO(2)$ rotation; at nonzero frequency the matrix is
generally complex and obeys the Majorana reality constraint rather
than literal $SO(2)$ reality. At finite frequency, the first
corrections are controlled by the dimensionless product
$a|\omega|$. This is the microscopic origin of the long-wavelength
universality used throughout the paper. The numerical analysis then
asks a more delicate question: how much of that one-particle
universality survives after restriction to a subsystem and
nonlinear processing of the covariance spectrum?

The entanglement problem is subtler. The two-chain and soft/hard
partitions explicitly retain the channel and frequency structure in
which the infrared memory is generated. Their finite-lattice
entropies are therefore naturally nonzero and show substantial
Rényi dependence. A continuum statement about their infrared
divergence, however, requires the singular-value distribution of
the reduced covariance, not merely the Hilbert--Schmidt norm of the
anomalous kernel.

The spatial partition is the sharper locality test. The vacuum has
the expected critical logarithm, but the shutter correction is
numerically much smaller on the accessible interval sizes. This
observation is compatible with the possibility that global infrared
memory is softened by spatial restriction. It is not a proof: the
accessible lattice is finite, and the sign and magnitude of the
correction vary with width and Rényi index.

The zero-memory qualification is equally important. The
inverse-frequency Bogoliubov kernel is an analytic result of the
instantaneous scattering approximation; the finite-memory
time-dependent scattering problem has not been solved here. The
small values $a/\tau=0.01$--$0.04$ in the controlled lattice runs,
together with the switching-time scan of
Sec.~\ref{sec:tau-convergence}, justify treating this as a
controlled regime parameter, but they do not by themselves furnish
a rigorous bound on the omitted continuum kernel correction.

The finite-width profile comparison also clarifies what
``universality'' should mean. The correct statement is not that
every finite lattice calculation gives identical entropy. The
microscopic theorem only says that smooth profiles approach the
same low-frequency scattering angle when their integrated coupling
is matched. Entanglement universality is a stronger property and
must be tested by a controlled scaling limit at fixed physical
angle.

The conformal-defect comparison is now made in the geometry for which the
known defect formula applies: the entanglement cut passes through the shutter.
Using the parameter-free transmission $s_{\rm shutter}=\sech\lambda_f$, the
half-chain entropy slope agrees with the Eisler--Peschel effective-central-charge
prediction at the percent level over the tested range. This benchmark is static
and does not determine the entropy of an interval whose cuts lie away from the
defect.

The present results also motivate a clear division between
established and open statements, summarized in
Sec.~\ref{sec:what-established}. The exact zero-frequency transfer
matrix and the long-wavelength scattering limit are analytic
results. The stability of the leading infrared coefficient of the
anomalous Hilbert--Schmidt norm is established within the
zero-memory approximation. The finite-lattice entropies are
numerical pilot results. The continuum entropy asymptotics and the
exact finite-memory correction are open calculations.

A complexity observable is not included. It would require
additional choices of reference state, circuit class, metric, and
continuum renormalization. Those choices are not needed to answer
the present, more sharply defined question: whether a microscopic
finite-width shutter has a universal entanglement response.

\section{Conclusions}
\label{sec:conclusions}

We have developed a finite-width Majorana shutter framework in which the
microscopic scattering problem, Gaussian many-body response, and spatial
entanglement are treated within a common regulator. The central analytic
result is the exact zero-frequency scattering matrix
\begin{equation*}
S_a(0)=R[\arctan(\sinh\lambda)],
\end{equation*}
which is independent of the detailed normalized smooth profile. At finite
frequency the approach to this rotation is controlled by $a|\omega|$ within
the stated profile class, and the finite-width coupling is related to the
symmetric point-interaction convention by
$\lambda=2\operatorname{artanh}(g/2)$.

Within the zero-memory approximation, a time-dependent switch inherits the
same low-frequency scattering data and produces an inverse-sum-frequency
Bogoliubov kernel. The squared Hilbert--Schmidt norm therefore has a logarithmic
infrared growth. The finite-memory correction has not been derived or bounded
rigorously, so this continuum statement is explicitly conditional on the
zero-memory regime.

The lattice calculations give three independent many-body manifestations of
the scattering angle. First, the doubled-lattice ground-state overlap obeys
$|\langle\Psi_-|\Psi_+\rangle|\sim N^{-\gamma_{\rm AO}}$ with
$\gamma_{\rm AO}=(\Delta\theta/2\pi)^2$ within the quoted finite-width
and profile corrections; a single undoubled continuum Majorana pair carries
one half of this exponent. For the open-chain regulator the factor of two is
validated empirically by the agreement with the parameter-free AO and excitation
coefficients; it is not claimed as a first-principles boundary-exponent derivation.
The Loschmidt amplitude in the pre-revival window
is consistent with the corresponding factor-of-two time exponent.

Second, the post-quench excitation number
$N_{\rm exc}=\|\beta\|_{\rm HS}^2$ grows logarithmically with system size.
For the doubled lattice its measured coefficient follows
\begin{equation*}
2C_{\rm qp}=\frac{4}{\pi^2}
\sin^2\left(\frac{\Delta\theta}{2}\right),
\end{equation*}
for the tested sudden switches and for the smooth $\tau=10$ and $\tau=100$
protocols. The $\tau=100$ calculation reaches
$\varepsilon_1\tau\simeq0.10$ at $N=3072$ and gives a local slope of
$0.0708$, compared with the prediction $0.0713$; a fit over
$N=1536$--$3072$ gives $0.0714$. This is a ground-state finite-size scaling measurement of the coupled lattice,
not a dynamical measurement of the continuum Bogoliubov kernel and not a
continuum asymptotic extrapolation. Its agreement with the parameter-free
prediction is nontrivial, but it does not by itself establish the zero-memory
approximation. At weak angle the excitation
coefficient reduces to $4\gamma_{\rm AO}$, whereas at finite angle the two
functions differ because one is proportional to $\sin^2(\Delta\theta/2)$.

Third, when the entanglement cut passes through the shutter, the half-chain
entropy follows the conformal-defect effective central charge evaluated at the
parameter-free transmission
$s=\sech\lambda_f=\cos\theta_{\lambda_f}$. The measured logarithmic
slopes agree with the Eisler--Peschel prediction at the percent level over the
tested transmission range. This result is specific to the defect-cut geometry:
when the shutter lies inside an interval whose cuts remain in the homogeneous
bulk, the present data show only a subleading finite-size correction and no
evidence for a modified leading logarithm.

For the deterministic numerical results, the quoted spreads should be read as
systematic envelopes rather than statistical uncertainties. The dominant sources
are finite-size drift, fit-window dependence, and finite-width/profile effects;
the time-step refinement is below the percent level for the controlled entropy
run, while the AO exponent varies by about $\pm7\%$ across the tested
width/profile/fit choices and the Loschmidt slope varies by about $4\%$ over
the stated fit windows. No controlled $1/N$ AO extrapolation is reported because
the retained data record does not contain the raw overlap amplitudes needed to
separate finite-size drift from fit-window dependence.

The local bilinear two-point function provides a separate consistency check.
After connected subtraction and treatment of the doubled-lattice staggered
component, its fitted exponent approaches the ordinary Majorana value $p=2$,
while the fitted logarithmic amplitude is not stable under the available
regulator and control variations. The data therefore provide no evidence for a
logarithmic partner in this tested correlator, without making a claim about the
full scaling-limit operator algebra.

The remaining limitations are sharply defined. The finite-memory correction to
the time-dependent continuum kernel remains an open analytic problem. The
soft/hard frequency-partition results are finite-lattice diagnostics and do not
establish fixed-physical-cutoff continuum convergence. Most importantly, the
continuum large-interval spatial entropy has not been assigned a universal
coefficient. Establishing such a result would require controlled scaling of the
restricted covariance at fixed physical scattering angle, together with the
appropriate ultraviolet and infrared limits.

The resulting hierarchy is therefore
\begin{equation*}
\boxed{
\Delta\theta
\longrightarrow
\{\gamma_{\rm AO},\;C_{\rm qp},\;\text{Loschmidt exponent},\;c_{\rm eff}(s)\}
}
\end{equation*}
with the important qualification that these quantities describe different
observables and are not interchangeable. The finite-width shutter thus provides
a concrete setting in which microscopic low-frequency scattering data can be
followed into several quantitatively testable many-body responses without
assuming a universal spatial entropy law that has not been established.

\section*{Data and code availability}
The numerical data and analysis scripts used to generate the
tables, figures, overlap and Loschmidt fits, LCFT correlator
controls, anchor-site scans, fit-window checks, and the numerical
diagnostics discussed in this paper are provided with the
manuscript source package. The non-adiabatic chain and soft/hard
diagnostics are supplied separately as supplementary material.


\appendix

\section{Transfer-matrix details}
\label{app:transfer}

For completeness, the amplitude equation can be written as
\begin{equation*}
\partial_xY=N_a(x;\omega)Y,
\qquad
Y=(A,B)^T.
\end{equation*}
The antilinear map $Y\mapsto\sigma_1\overline Y$ preserves the
equation. Hence
\begin{equation*}
T_a\sigma_1=\sigma_1\overline{T_a},
\end{equation*}
which gives the reality structure used in Eq.~\eqref{eq:SfromT}.
Since $\Tr N_a=0$, Liouville's formula gives $\det T_a=1$.

At $\omega=0$,
\begin{equation*}
\mathcal P\exp\left[\lambda\sigma_1\int w_a\right]
=e^{\lambda\sigma_1}.
\end{equation*}
For nonzero $\omega$, the Dyson expansion is controlled by the
support of $w_a$. Each insertion carries a phase difference of
order $a|\omega|$ relative to the zero-frequency generator. The
first moment of an even profile vanishes, and the remaining ordered
terms give the bound in Eq.~\eqref{eq:Tbound}.

\section{Self-contained derivation of the point-contact infrared logarithm}
\label{app:IR-derivation}

The infrared result used in Sec.~\ref{sec:IR} is that the squared
Hilbert--Schmidt norm of the anomalous Bogoliubov kernel grows
logarithmically with the infrared cutoff. We reproduce the
derivation here for completeness.

Consider a point-contact shutter with a smooth switch
$\lambda(t)\to\lambda_\pm$ as $t\to\pm\infty$. The scattering
matrix at zero frequency is $S_\pm=R(\theta_\pm)$, and the
anomalous kernel is
\begin{equation}
\beta(\omega,\nu)=\frac{1}{2\pi}\int_{-\infty}^\infty dt\,
e^{i(\omega+\nu)t}\,S(t;\omega),
\qquad \omega,\nu>0.
\end{equation}
For small positive frequencies, the dominant contribution comes
from the asymptotic difference $\Delta S=S_+-S_-$:
\begin{equation}
\beta(\omega,\nu)\sim-\frac{\Delta S}{2\pi i(\omega+\nu)}
\qquad (\omega,\nu\to0^+).
\end{equation}
The squared Hilbert--Schmidt norm, integrated over a frequency
window $\mu<\omega,\nu<\omega_*$, is
\begin{equation}
\|\beta_{\mu,\omega_*}\|^2_{\rm HS}
=\frac{\|\Delta S\|^2_{\rm F}}{4\pi^2}
\int_\mu^{\omega_*}d\omega\int_\mu^{\omega_*}d\nu\,
\frac{1}{(\omega+\nu)^2}+\cdots .
\end{equation}
The integral is elementary:
\begin{equation}
\int_\mu^{\omega_*}d\omega\int_\mu^{\omega_*}d\nu\,
\frac{1}{(\omega+\nu)^2}
=\log\frac{\omega_*}{\mu}+O(1).
\end{equation}
Therefore
\begin{equation}
\|\beta_{\mu,\omega_*}\|^2_{\rm HS}
=\frac{\|\Delta S\|^2_{\rm F}}{4\pi^2}\log\frac{\omega_*}{\mu}+O(1),
\end{equation}
which is Eq.~\eqref{eq:HSlog}. The coefficient $C_{\rm qp}$ in
Eq.~\eqref{eq:Cqp} follows from
$\|\Delta S\|^2_{\rm F}=8\sin^2(\Delta\theta/2)$ for rotations.
This derivation is self-contained and does not rely on the
companion paper \cite{Vahedi2026a}; the companion result is used
only for the point-contact statement that the same logarithm
appears in the genuinely time-dependent point-contact problem.

\section{Gaussian entropy and canonical form}
\label{app:entropy}

Every real antisymmetric covariance matrix admits an orthogonal
canonical form
\begin{equation*}
O_A\Gamma_AO_A^T
=\bigoplus_j
\begin{pmatrix}0&\xi_j\\-\xi_j&0\end{pmatrix}.
\end{equation*}
Each block corresponds to one effective fermionic mode with
reduced density matrix
\begin{equation*}
\rho_j=
\begin{pmatrix}
(1+\xi_j)/2&0\\
0&(1-\xi_j)/2
\end{pmatrix}
\end{equation*}
up to the ordering convention for the occupation basis.
Therefore
\begin{equation*}
\rho_A\simeq\bigotimes_j\rho_j,
\end{equation*}
which immediately yields Eqs.~\eqref{eq:svN} and \eqref{eq:Renyi}.
No diagonalization of a reduced density matrix of exponential
dimension is required.

The same construction applies to spatial, two-chain, and
energy-mode partitions. The only change is the projector $P_A$
selecting the subsystem Majoranas.

\end{document}